\documentclass[sigconf, nonacm]{acmart}

\usepackage{xspace}
\usepackage{graphicx}
\usepackage{tikz}
\usepackage{comment}
\usepackage{enumitem}
\newcommand{\CARTS}{\textsc{CARTS}\xspace}
\newcommand{\CARTSfull}{Contextual Autoregressive Rank Transcoding Steganography (\CARTS)}
\usepackage{stfloats} 

\theoremstyle{plain}
\newtheorem{theorem}{Theorem}
\newtheorem{lemma}{Lemma}
\newtheorem{prop}{Proposition}
\newtheorem{corollary}{Corollary}

\theoremstyle{definition}
\newtheorem{defn}{Definition}
\newtheorem{prob}{Problem}

\theoremstyle{remark}
\newtheorem{rmk}{Remark}
\newtheorem{ex}{Example}

\begin{document}

\title{\CARTS: Contextual Autoregressive Rank Transcoding Steganography for Full-Capacity Keyed Text Encoding}

\author{Wissam Ghantous}
\orcid{0002-2119-0232}
\affiliation{%
  \institution{University of Central Florida}
  \city{Orlando}
  \state{Florida}
  \country{USA}}
\email{wissam.ghantous@ucf.edu}

\author{Alexander V. Mantzaris}
\orcid{0002-0026-5725}
\affiliation{%
  \institution{University of Central Florida}
  \city{Orlando}
  \state{Florida}
  \country{USA}}
\email{alexander.mantzaris@ucf.edu}

\renewcommand{\shortauthors}{Ghantous et al.}

\begin{abstract}
Autoregressive language models can be used to transform a payload 
text into a stegotext of identical token length by preserving 
per-position rank information across contexts --- a methodology 
we formalize as Contextual Autoregressive Rank Transcoding 
Steganography (\CARTS). While the Calgacus construction 
\cite{norelli2025llms} demonstrated this phenomenon experimentally, 
no formal security analysis existed. This paper provides the first 
rigorous treatment of \CARTS. 
We show its exact correctness under deterministic model assumptions, 
introduce a rank-coordinate representation in which keys act as bijections on rank-vector space, 
define relevant security notions and the computational problems 
naturally associated with the construction --- context search, key collisions, message equivocation, 
and non-commutativity of the encoding maps --- 
and study the theoretical relationships between them,
including the characterization of message equivocation in terms of context search, 
and the tension between key collisions and message equivocation. 
An empirical study on Llama 3 8B 
confirms exact recovery of the original payload in all tested cases, 
finds no key collisions under random key 
generation, establishes that a hand-crafted collision is local rather 
than global, and finds no commuting key pairs --- suggesting resistance 
to the attack vectors studied. This work opens a formally grounded 
research agenda for the constructive use of language models in 
cryptography and privacy-preserving communication.
\end{abstract}

\keywords{Steganography, Linguistic steganography, Information hiding, Large language models, Security analysis}
\maketitle

\section{Introduction}
\label{sec:introduction}

Recent years have seen growing interaction between artificial intelligence and cybersecurity, with researchers studying both how cryptographic techniques can be used to analyze or attack machine-learning systems, and conversely how AI systems can be used to attack or weaken existing cryptographic constructions. For example, prior work has investigated cryptographic approaches to model extraction and parameter recovery attacks \cite{canales2025extracting,carlini2025polynomial}, while other directions study the use of machine learning for cryptanalysis or automated attacks on security systems \cite{zhang2025neural,gohr2019improving,wenger2022salsa}. In contrast, the present work considers a different direction: the use of autoregressive language models themselves as the foundation for security primitives. 

Steganography aims to embed a secret payload into an innocuous-looking cover object so that a third party cannot reliably determine whether hidden communication is taking place \cite{cachin1998stego_model,hopper2002provably_secure}.
It therefore complements cryptography: encryption protects message content, while steganography targets the observability of communication itself, which matters for censorship resistance, and deniable communication.

Recent advances in neural language modeling have enabled practical \emph{linguistic} steganography, where natural language is the cover channel.
Early neural approaches used language models to generate covertext while encoding bits through constrained sampling or coding schemes \cite{ziegler-etal-2019-neural,shen-etal-2020-near}.
From a provable security perspective, security oriented work has investigated how to achieve indistinguishability relative to realistic text distributions by combining universal steganography ideas with modern generative models \cite{kaptchuk2021meteor}.
In parallel, black-box settings (where only an interface or API is available) motivate alternative constructions and threat models \cite{wu2024llm_stega}.

In a conventional linguistic steganography setting, the sender begins with an existing cover text and modifies it using a shared key to produce a stegotext carrying the hidden payload. A receiver possessing the same key can then recover the secret message.
The Calgacus construction of \cite{norelli2025llms} proposes a new way, exploiting the use of LLMs, 
to achieve steganography, producing a new text of the same length as the payload. 
Instead of editing a fixed cover text, Calgacus uses a secret prompt as the key to steer the topic and style of the generated output. 
The payload text is first converted into LLM token ranks, 
then the LLM is run under the key and forced to emit tokens whose ranks match that sequence.
Decoding repeats the same process backwards, obtaining the original message. 
This allows the user to achieve \emph{full-capacity, length-preserving text-to-text transcoding}, where a meaningful message can be hidden inside another coherent text of exactly the same token length. 

This new construction opens up a whole area of research on the constructive use of LLMs in  privacy-preserving communication, with the goal of building new primitives, and has not yet garnered too much attention. 
Only limited prior work has studied the sensitivity of the Calgacus construction to variations in the secret context, showing that small perturbations of the key can induce substantially different generated outputs \cite{Mantzaris_Ghantous_Stinebrickner_Jahangiri_Webinga_2026}.
As a consequence, many theoretical and practical questions remain unanswered. 
In particular, this construction has only been demonstrated to work experimentally, via examples, without a formal security analysis. The initial paper \cite{norelli2025llms} does not provide any concrete security definitions, nor does it identify the computational problems underlying the security properties for their construction. The goal of this paper is to fill this gap, and provide an initial concrete empirical study of the assumptions underlying the security of this new construction. 

\smallskip
\noindent
\textbf{Terminology.}
We propose the formal term \CARTSfull{} to describe this methodology.
Informally, \CARTS takes (i) a payload sentence, (ii) a private key realized as a secret context/prompt, and (iii) a fixed autoregressive language model, and outputs a new sentence of the same token length.
Decoding with the same key deterministically maps the generated
stegotext back to the original payload.
The surface text can appear benign (or policy-compliant when used over public communications channels) while the payload is recoverable only by parties holding the key and the model specification.

\smallskip
\noindent
\textbf{Contributions.}
In this paper, we (i) formalize \CARTS as a keyed rank-transcoding steganographic primitive and present the Calgacus construction of \cite{norelli2025llms} within this framework; 
(ii) introduce a set of computational problems and associated assumptions naturally motivated by the underlying LLM; 
(iii) formally define some desirable security notions for \CARTS as a steganographic protocol; 
(iv) characterize the security properties of \CARTS and the corresponding attacker tasks in terms of these computational problems (e.g., context search and key collisions)
and (v) provide an initial theoretical and experimental study of these computational problems to assess their difficulty.

\smallskip
\noindent
\textbf{Roadmap.}
In Section \ref{sec:Conceptual_Framework}, we explain the subtle conceptual difference between CARTS and the prior types of steganography. 
In Section \ref{sec:carts-protocol}, we formally introduce the general CARTS steganography framework and view Calgacus as a natural instantiation, using LLMs. 
In Section \ref{sec:security}, we begin by discussing scope and adversarial models, then introduce various desirable security properties that a steganographic scheme could have, and the corresponding computational problems, naturally motivated by the underlying LLM used. At a second stage, we study these problems theoretically, from various angles. 
In particular, we discuss matters of equivocation, key collisions, as well as the issue of commuting keys and its implications on security. 
Lastly, in Section \ref{sec:EmpiricalExplorations}, we initiate an empirical study into the relevant computational problems that arose in the body of the paper. 
In the conclusion, we provide some concrete problems to be studied next, to further solidify our theoretical and practical understanding of LLMs, as a basis for steganography.

\section{Conceptual Framework}
\label{sec:Conceptual_Framework}

Formally, in steganography, the sender begins with an innocent-looking \emph{cover} object $c$ (for example, an ordinary piece of text)
and embeds a secret payload $m$ into that cover.
The output, called a \emph{stegotext} $s$, should look like a normal cover while allowing an intended receiver to recover $m$.
Many formulations include a shared secret key $k$ that controls the embedding and extraction procedures \cite{cachin1998stego_model,hopper2002provably_secure,simmons1984prisoners,anderson2002limits}.
At a high level, the interface is:
\begin{equation*}
\label{eq:classic-stego}
\begin{aligned}
s &\gets \mathrm{Embed}(c, m; k),\\
\hat{m} &\gets \mathrm{Extract}(s; k).
\end{aligned}
\end{equation*}
Intuitively, $\mathrm{Embed}$ makes small, hard-to-notice changes to the cover $c$ so that the payload $m$ is hidden in $s$,
and $\mathrm{Extract}$ reverses the process (given $k$) to recover $\hat{m}$.

In classical linguistic steganography, the cover object is simply text. 
Traditional techniques often modify the existing text via substitutions or syntactic changes \cite{provos2003hide}, while neural methods instead \emph{generate} text under constraints that encode a secret \cite{ziegler-etal-2019-neural}.
Arithmetic-coding style constructions can improve capacity and imperceptibility by using the language model distribution directly during generation \cite{shen-etal-2020-near}. 
All these methods, however, share the same methodology: they hide a message (either fully, or in parts) inside another text. 

On the other hand, the approach suggested in \cite{norelli2025llms} is completely different. 
It allows one to deterministically \textit{transform} a message into a new meaningful one, with a completely unrelated meaning. This can be viewed as the purest form of steganography. 
There is no need for a cover object anymore, as the initial message, under the secret key, is entirely transformed into a new unrelated text, becoming the covertext, so to speak. 
More concretely, \CARTS differs from typical bit-encoding approaches in that the payload is itself a natural-language token sequence rather than a low-rate bitstream, enabling one output token per payload token, thus offering \textit{full capacity} encoding. 
Given a fixed model, \CARTS records the payload’s per-position rank trace under an empty context and regenerates a stegotext that realizes the same ranks under a secret keyed context, yielding a length-preserving and deterministically invertible encoding.
A detailed description of this scheme is given in Section \ref{sec:carts-protocol}.

\section{\CARTS protocol}
\label{sec:carts-protocol}
\subsection{The standard protocol}
Fix an autoregressive language model $\mathcal{M}$, and 
let $V_{\mathrm{raw}}$ be its finite tokenizer vocabulary.
We fix two token sets, $V\subseteq V_{\mathrm{raw}}$ and $C\subseteq V_{\mathrm{raw}}$, 
where $V$ is the admissible output vocabulary over which next-token
rankings and deterministic generation are performed, and $C$ is the
context vocabulary used for prompts and previously generated tokens.
We assume $V\subseteq C$, so that generated tokens can be appended to a
context.
We also assume that special tokens or any tokens that are masked by the implementation are 
either removed from $V$ or included with a fixed deterministic masking
convention. All rankings below are rankings over this same admissible
output vocabulary $V$.

Let $N:=|V|$. We write $V^n$ for the set of admissible generated token sequences of
length $n$, and $C^*$ for the set of finite token contexts. For
compactness, write
\[[N]:=\{1,\ldots,N\}, \qquad \mathcal{X}_n:=[N]^n .\]
Here, length is always measured in tokens under the fixed tokenizer
associated with the model.

The protocol described further below runs for exactly $n$ generation steps. If an
end-of-sequence token is retained in $V$, it is treated as an ordinary
token for the purposes of ranking and generation, and it does not cause
early termination.

\begin{defn}[General CARTS protocol]\label{CARTS_def}
Fix a key space $\mathcal{K}\subseteq C^*$. A length-preserving
\CARTS protocol on $V^n$ is a keyed pair of algorithms
\[ E_k,D_k:V^n\longrightarrow V^n, \qquad k\in\mathcal{K}, \]
such that
\[ D_k(E_k(x))=x \]
for every payload token sequence $x\in V^n$ and every key $k\in\mathcal{K}$.

We say that $E_k$ encodes a payload text into a stegotext, and that
$D_k$ decodes the stegotext back into the payload. In \CARTS, the key
$k$ is realized as a secret context or prompt.
\end{defn}

We now describe the Calgacus \cite{norelli2025llms} instantiation of \CARTS. 
For our fixed model $\mathcal{M}$, and a context $c\in C^*$, we write
$p_{\mathcal{M}}(\cdot\mid c)$
for the restricted and renormalized next-token distribution on $V$.
This induces a fixed, deterministic ordering on $V$, by decreasing probability under
$p_{\mathcal{M}}(\cdot\mid c)$, which we denote by 
\[\pi_c:[N]\longrightarrow V. \] 
We assume this ordering to have no ties, as they can be broken deterministically. 
Thus, $\pi_c(j)$ is the token of rank $j$, given the context $c$.

For a token $a\in V$ and a context $c\in C^*$, we define
\[\operatorname{rank}_{\mathcal{M}}(a\mid c)=j \quad\Longleftrightarrow\quad \pi_c(j)=a.\]
As $\mathcal{M}$ is fixed, we often will omit the subscript.

For a sequence $x=(x_1,\ldots,x_n)\in V^n$ and context $c\in C^*$,
define the autoregressive rank trace
\[R_c(x)=\left( \operatorname{rank}(x_i\mid c,x_1,\ldots,x_{i-1}) \right)_{i=1}^n \in \mathcal{X}_n .\]
This is the vector of ranks obtained by reading $x$ from left to right
while updating the context with the previously read tokens.
Conversely, for a rank vector $r=(r_1,\ldots,r_n)\in\mathcal{X}_n$
and context $c\in C^*$, define the autoregressive rank generator
\[G_c(r)=y=(y_1,\ldots,y_n)\in V^n\]
recursively by
\[y_i=\pi_{(c,y_1,\ldots,y_{i-1})}(r_i),\qquad i=1,\ldots,n .\]
Thus, $G_c$ generates one token at a time, always selecting the token
whose current rank is prescribed by $r_i$.

\begin{lemma}[Rank-trace inversion]\label{lem:rank-trace-inversion}
For every context $c\in C^*$, every token sequence $x\in V^n$, and
every rank vector $r\in\mathcal{X}_n$,
\[ R_c(G_c(r))=r, \qquad G_c(R_c(x))=x. \]
\end{lemma}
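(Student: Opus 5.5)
Both identities follow from one fact: for every fixed context $c'\in C^*$, the map $\pi_{c'}:[N]\to V$ is a bijection. It is a ranking of all $N=|V|$ tokens with ties broken deterministically. The definition of rank then says $\operatorname{rank}(\cdot\mid c')=\pi_{c'}^{-1}$. So we have
\[\operatorname{rank}(\pi_{c'}(j)\mid c')=j \quad\text{and}\quad \pi_{c'}(\operatorname{rank}(a\mid c'))=a\]
for all $j\in[N]$ and $a\in V$. We also use that the contexts $(c,y_1,\ldots,y_{i-1})$ are legitimate elements of $C^*$, since $V\subseteq C$.

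For $R_c(G_c(r))=r$, let $y=G_c(r)$ and induct on the position $i$. At step $i$, the rank $R_c(y)_i$ is computed in the context $c_i:=(c,y_1,\ldots,y_{i-1})$. This is exactly the context in which $y_i$ was generated, because $G_c$ and $R_c$ read the same prefix $y_1,\ldots,y_{i-1}$. Hence
\[R_c(y)_i=\operatorname{rank}(y_i\mid c_i)=\operatorname{rank}(\pi_{c_i}(r_i)\mid c_i)=r_i.\]
No inductive hypothesis is actually needed here, because the prefix is literally shared.

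For $G_c(R_c(x))=x$, let $r=R_c(x)$ and $y=G_c(r)$, and prove $y_i=x_i$ by strong induction on $i$. Suppose $y_j=x_j$ for all $j<i$. Then the generation context $(c,y_1,\ldots,y_{i-1})$ equals the reading context $(c,x_1,\ldots,x_{i-1})=:c_i$. Therefore
\[y_i=\pi_{c_i}(r_i)=\pi_{c_i}(\operatorname{rank}(x_i\mid c_i))=x_i.\]
The base case $i=1$ is the same computation with context $c$ alone.

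The only delicate point is determinism. The induction relies on $\pi_{c'}$ being a well-defined function of the context alone: the same context must always yield the same ordering, with the fixed tie-breaking and masking conventions. I would state this explicitly as following from the standing assumption that $\mathcal{M}$ is deterministic. Beyond that, the argument is routine.
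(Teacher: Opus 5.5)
Your proof is correct and follows the same position-by-position argument as the paper, which likewise rests on $\pi_{c'}$ being a deterministic bijection whose inverse is $\operatorname{rank}(\cdot\mid c')$. Your version is more careful in one respect: you make explicit the strong induction needed for $G_c(R_c(x))=x$, which ensures the generation context and the reading context coincide at each step, whereas the paper's proof leaves this implicit.
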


\begin{proof}
Both identities follow position by position from the definition of
$\pi_c$. At step $i$, the rank trace and the rank generator use the
same updated context $c$ followed by the tokens already processed or
generated. Since $\pi_c$ is a deterministic ordering of the vocabulary,
selecting a token by rank and then measuring its rank gives back the
same rank, and measuring a token's rank and then selecting that rank
gives back the initial token.
\end{proof}

The Calgacus \CARTS encoder and decoder are
\[ E_k(x):=G_k(R_{\varnothing}(x)), \qquad D_k(y):=G_{\varnothing}(R_k(y)), \]
where $\varnothing$ denotes the empty context.
We thus directly have the following correctness result, 
whose proof follows directly from Lemma \ref{lem:rank-trace-inversion}.
\begin{theorem}[Correctness]\label{thm:carts-correctness}
For every $x\in V^n$ and every key $k\in\mathcal{K}$,
\[ D_k(E_k(x))=x. \]
\end{theorem}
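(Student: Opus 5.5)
The plan is to unfold the definitions of $E_k$ and $D_k$ and apply the two identities of Lemma~\ref{lem:rank-trace-inversion} once each, with the two different contexts $k$ and $\varnothing$. No induction beyond what is already in the lemma is needed.

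First, fix $x\in V^n$ and $k\in\mathcal{K}$, and set $r:=R_{\varnothing}(x)\in\mathcal{X}_n$. By definition, $E_k(x)=G_k(r)$. Applying $D_k$ gives
\[
D_k(E_k(x)) = G_{\varnothing}\bigl(R_k(G_k(r))\bigr).
\]

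Second, invoke the first identity of Lemma~\ref{lem:rank-trace-inversion} with context $c=k$, which is legitimate since $\mathcal{K}\subseteq C^*$. This yields $R_k(G_k(r))=r$, so the expression collapses to $G_{\varnothing}(r)=G_{\varnothing}(R_{\varnothing}(x))$.

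Third, invoke the second identity of the lemma with context $c=\varnothing$, obtaining $G_{\varnothing}(R_{\varnothing}(x))=x$. This is exactly the claim.

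There is no real obstacle here. The only points to check are bookkeeping ones:
\begin{itemize}
\item the intermediate rank vector $r$ lies in $\mathcal{X}_n=[N]^n$, so $G_k$ is defined on it;
\item every generated token lies in $V\subseteq C$, so appending it to the context $k$ keeps the context in $C^*$ at each step of $G_k$ and $R_k$;
\item both $\pi_k$-based maps use the same convention for masking and tie-breaking, which holds because rankings are over the fixed admissible vocabulary $V$ with deterministic tie-breaking.
\end{itemize}
With these verified, the chain of equalities completes the proof. It also shows $E_k$ is a bijection on $V^n$ with inverse $D_k$, since the symmetric computation $E_k(D_k(y))=y$ follows identically.
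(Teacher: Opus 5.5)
Your proposal is correct and follows the paper's own route. The paper says correctness follows directly from Lemma~\ref{lem:rank-trace-inversion}, and its proof of Theorem~\ref{thm:generalized-two-context-carts}, specialized to $a=\varnothing$, $b=k$, $\phi=\operatorname{id}_{\mathcal{X}_n}$, is the same chain $G_{\varnothing}(R_k(G_k(R_{\varnothing}(x))))=G_{\varnothing}(R_{\varnothing}(x))=x$ that you write down.
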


The above theorem is a mathematical statement
about a fixed deterministic next-token ordering, which works 
computationally only when the encoder and decoder compute exactly the
same token rankings at every step. 
In practice, this requires identical tokenization, admissible vocabulary, prompt serialization, model weights, masking convention, numerical precision, and tie-breaking rule between sender and receiver; if any of these differ, correctness may fail.

\begin{rmk}[Token perturbations and error propagation]
\label{rmk:error-propagation}
These results assume that the received stegotext is
identical to the transmitted stegotext. The protocol is generally not
robust to token edits. If two stegotexts $y,\tilde{y}\in V^n$ agree up
to position $j-1$ but differ at position $j$, then their rank traces
under the same key $k$ agree up to position $j-1$, but the rank at
position $j$ differs:
\[ R_k(y)_i=R_k(\tilde{y})_i \quad\text{for }i<j, \qquad R_k(y)_j\neq R_k(\tilde{y})_j. \]
For positions $i>j$, there is no general guarantee, because the
autoregressive contexts have diverged. Thus a single token edit can
affect the decoded suffix. Robust variants would require additional
redundancy, synchronization, or error-correction mechanisms. 
We do not address these concerns here, 
as we focus instead on the security concerns and assume that the messages are well transmitted. 
\end{rmk}

\subsection{Generalized two-context rank transforms}
\label{subsec:generalized-carts}

The base Calgacus construction uses the empty context to read the
payload rank trace and the key context to generate the stegotext. The
same formalism supports a more general two-context construction.

Let $a,b\in C^*$ be two contexts, and let
\[ \phi:\mathcal{X}_n\to\mathcal{X}_n \]
be a bijection on the rank space. Define
\[E_{a,b,\phi}(x) := G_b(\phi(R_a(x))), \]
and
\[D_{a,b,\phi}(y):=G_a(\phi^{-1}(R_b(y))).\]

\begin{theorem}[Generalized two-context correctness]
\label{thm:generalized-two-context-carts}
For every $a,b\in C^*$, every bijection
$\phi:\mathcal{X}_n\to\mathcal{X}_n$, and every $x,y\in V^n$,
\[D_{a,b,\phi}(E_{a,b,\phi}(x))=x, \qquad E_{a,b,\phi}(D_{a,b,\phi}(y))=y. \]
\end{theorem}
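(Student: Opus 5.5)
The plan is to reduce both identities to Lemma \ref{lem:rank-trace-inversion} together with the fact that $\phi$ is a bijection on $\mathcal{X}_n$. The lemma says that for each fixed context $c$, the maps $R_c:V^n\to\mathcal{X}_n$ and $G_c:\mathcal{X}_n\to V^n$ are mutually inverse bijections. Then $E_{a,b,\phi}=G_b\circ\phi\circ R_a$ and $D_{a,b,\phi}=G_a\circ\phi^{-1}\circ R_b$ are compositions of bijections whose inner factors cancel in pairs.

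For the first identity, fix $x\in V^n$ and expand
\[D_{a,b,\phi}(E_{a,b,\phi}(x)) = G_a\bigl(\phi^{-1}\bigl(R_b\bigl(G_b(\phi(R_a(x)))\bigr)\bigr)\bigr).\]
Apply $R_b(G_b(r))=r$ from Lemma \ref{lem:rank-trace-inversion} with context $b$ and $r=\phi(R_a(x))\in\mathcal{X}_n$; this collapses the inner pair. Then $\phi^{-1}\circ\phi=\mathrm{id}_{\mathcal{X}_n}$ removes the next pair, leaving $G_a(R_a(x))$. This equals $x$ by the second identity of the lemma with context $a$.

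For the second identity, fix $y\in V^n$ and expand $E_{a,b,\phi}(D_{a,b,\phi}(y))=G_b(\phi(R_a(G_a(\phi^{-1}(R_b(y))))))$. Cancel in the mirror order: first $R_a\circ G_a=\mathrm{id}$ on $\mathcal{X}_n$, then $\phi\circ\phi^{-1}=\mathrm{id}$, and finally $G_b(R_b(y))=y$.

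No step is a real obstacle. The one point to check is that the lemma's identities hold for \emph{every} context, including the arbitrary $a,b\in C^*$ used here, and on all of $\mathcal{X}_n$ and $V^n$ rather than only on images. That holds because the lemma is stated for every $c\in C^*$, every $x\in V^n$ and every $r\in\mathcal{X}_n$. As a remark, Theorem \ref{thm:carts-correctness} is the special case $a=\varnothing$, $b=k$, $\phi=\mathrm{id}$.
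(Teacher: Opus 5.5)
Your proposal is correct and matches the paper's proof: expand $D_{a,b,\phi}(E_{a,b,\phi}(x))$, then cancel $R_b\circ G_b$, then $\phi^{-1}\circ\phi$, then $G_a\circ R_a$ using Lemma~\ref{lem:rank-trace-inversion}. You also write out the mirror-order cancellation for $E_{a,b,\phi}\circ D_{a,b,\phi}$, which the paper covers only with ``similarly''.
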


\begin{proof}
For $x\in V^n$, and $a,b$ and $\phi$ as above, we have 
\[ \begin{aligned}
D_{a,b,\phi}(E_{a,b,\phi}(x))
&= G_a\left( \phi^{-1} \left( R_b\left(G_b(\phi(R_a(x)))\right) \right) \right)\\
&= G_a\left(\phi^{-1}(\phi(R_a(x)))\right)\\
&= G_a(R_a(x))\\
&=x.
\end{aligned} \]
Similarly, for $y\in V^n$, 
$E_{a,b,\phi}(D_{a,b,\phi}(y))=y$.
\end{proof}
\begin{rmk}
The generalized construction above preserves exact invertibility for every
bijection \(\phi:\mathcal{X}_n\to\mathcal{X}_n\). However, the quality
of the generated stegotext depends strongly on the choice of
\(\phi\). The autoregressive model typically assigns high probability
to only a small subset of low-rank tokens at each step. Consequently,
transformations \(\phi\) that map low ranks to substantially larger
ranks will tend to force the generator to select improbable tokens,
often degrading fluency and semantic coherence.

In particular, an arbitrary permutation of \(\mathcal{X}_n\) will
generally not preserve the distributional structure of natural language
rank traces, and the resulting stegotext may appear nonsensical and
unnatural. Practical constructions therefore require transformations
\(\phi\) that preserve, approximately preserve, or otherwise control
rank magnitude and local rank statistics.
\end{rmk} 

The original Calgacus instantiation is recovered by taking
\[a=\varnothing, \qquad b=k, \qquad \phi=\operatorname{id}_{\mathcal{X}_n}. \]
Other choices of $a$, $b$, and $\phi$ can model 
other deterministic transformations of the rank trace, such as prompt-before-payload
variants or rank permutations. 

This more general construction could be used to build more interesting primitives, beyond 
the scope of steganography. 
It also remains an open question to build natural 
generalizations allowing for more than two contexts. 
We leave this task for future work, 
and focus here on the formalization of the security properties the existing CARTS scheme.

\begin{rmk}
It is not necessary, in principle, to use large language models to define a CARTS scheme.
However, autoregressive LLMs provide a natural carrier distribution and a rich conditional ranking structure.
Future developments in generative modeling \cite{goodfellow2014generative} (or alternative probabilistic sequence models) may yield additional instantiations of \CARTS-like protocols with different security and robustness tradeoffs.
\end{rmk}

\section{Security assumptions and protocol properties}
\label{sec:security}

In this section, we investigate several security properties and
computational problems naturally associated with \CARTS schemes.
We formalize and study notions including message equivocation, 
key collisions, and key commutativity, 
and explore their relationships and implications for security

\subsection{Scope and adversarial model}
\label{threat-model}
The classical notion of security in steganography is distributional indistinguishability: the stegotext should be statistically indistinguishable from an ordinary cover object, so that an adversary cannot reliably detect that hidden communication is taking place \cite{zhang2021provably,ding2023discop,de2022perfectly}. In the CARTS setting, this reduces to the question of whether LLM-generated text can be made indistinguishable from ordinary human-written or model-generated text. This question has received substantial attention in recent years \cite{mitchell2023detectgpt,sadasivan2023can}, and will continue to do so as language models become more capable. 
It also stands to reason that, as more research and resources are poured into LLMs, they will get better at mimicking human text, and detection will become significantly harder. 

Rather than revisiting this well-studied question, the present work targets a different and largely unexplored set of security properties specific to the rank-transcoding structure of CARTS. 
These properties --- equivocation, key collisions, context search, and non-commutativity of key-induced maps --- arise naturally from the algebraic and computational structure of the construction, and have not previously been studied in the setting of LLM-based steganography. They lead to questions that are novel from both a machine learning and a cryptographic perspective. 

When it comes to the adversarial model, 
we consider a sender and receiver who share a private key $k$, realized
as a secret context or prompt supplied to a fixed autoregressive
language model $\mathcal{M}$. The sender transforms a payload token
sequence $x\in V^n$ into a generated stegotext $y\in V^n$ of the same
token length, and the receiver deterministically recovers $x$ from $y$
using $k$ and the same model specification. 

The adversary may observe $y$ and is assumed to know the protocol
family, the tokenizer, and the model $\mathcal{M}$. We distinguish
several levels of adversarial access. 
In a \emph{ciphertext-only} setting, the adversary observes only the stegotext $y$. 
In a \emph{known-plaintext} setting, the adversary observes one or more payload-stegotext pairs
$(x_i, y_i)$ with $y_i = E_k(x_i)$ for a fixed unknown key $k$. 
Lastly, in a \emph{chosen-plaintext} setting, the adversary can additionally select payloads and obtain
their encodings under the same fixed unknown key. 

The formal analysis in this paper focuses primarily on 
the known-plaintext model, where the attacker can observe payload-stegotext pairs and knows the protocol family, the tokenizer, and the model $\mathcal{M}$.

\subsection{The rank-coordinate notation} \label{sec:rank_coordinate_notation}
We first introduce the rank-coordinate form of the Calgacus protocol.
This removes any ambiguity between token sequences and rank vectors.

Fix $n$ and let $\mathcal{X}_n=[N]^n$. For every key
$k\in\mathcal{K}$, define
\[F_k:\mathcal{X}_n\longrightarrow\mathcal{X}_n, \qquad F_k(r):=R_{\varnothing}(G_k(r)).\]
Thus $F_k$ maps the empty-context rank representation of a payload to
the empty-context rank representation of the resulting stegotext.

One can easily check that the inverse of $F_k$ is
\[ F_k^{-1}(w)=R_k(G_{\varnothing}(w)). \]
Therefore, for each fixed key $k$, the map $F_k$ is a bijection of $\mathcal{X}_n$.
\begin{prop}[Rank-coordinate conjugacy]\label{prop:rank-coordinate-conjugacy}
For every admissible key $k\in\mathcal{K}$,
\[ F_k = R_{\varnothing}\circ E_k\circ G_{\varnothing}, \]
and equivalently,
\[ E_k = G_{\varnothing}\circ F_k\circ R_{\varnothing}. \]
Moreover,
\[ D_k = G_{\varnothing}\circ F_k^{-1}\circ R_{\varnothing}. \]
Consequently, for any keys $k_1,k_2\in\mathcal{K}$,
\[ E_{k_2}\circ E_{k_1} = G_{\varnothing}\circ (F_{k_2}\circ F_{k_1}) \circ R_{\varnothing}. \]
\end{prop}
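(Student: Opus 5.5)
The proof is a direct unfolding of definitions, driven entirely by Lemma~\ref{lem:rank-trace-inversion} applied with the empty context $\varnothing$ and with the key context $k$. I will use the two identities $R_{\varnothing}\circ G_{\varnothing}=\operatorname{id}_{\mathcal{X}_n}$ and $G_{\varnothing}\circ R_{\varnothing}=\operatorname{id}_{V^n}$, together with their analogues for $k$. In other words, $R_\varnothing$ and $G_\varnothing$ are mutually inverse bijections between $V^n$ and $\mathcal{X}_n$, and every claimed identity amounts to conjugating by them.

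\textbf{First identity.} Substitute $E_k=G_k\circ R_{\varnothing}$ into $R_{\varnothing}\circ E_k\circ G_{\varnothing}$ to get $R_{\varnothing}\circ G_k\circ (R_{\varnothing}\circ G_{\varnothing})$. The bracketed factor is the identity on $\mathcal{X}_n$ by the lemma, which leaves $R_{\varnothing}\circ G_k=F_k$ by definition.

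\textbf{Equivalent form.} Pre-compose the first identity with $G_\varnothing$ and post-compose with $R_\varnothing$:
\[ G_{\varnothing}\circ F_k\circ R_{\varnothing}=(G_{\varnothing}\circ R_{\varnothing})\circ E_k\circ (G_{\varnothing}\circ R_{\varnothing}). \]
Both bracketed factors are $\operatorname{id}_{V^n}$, so the right-hand side is $E_k$.

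\textbf{Decoder.} I first verify the asserted inverse $F_k^{-1}=R_k\circ G_{\varnothing}$. The composition $(R_{\varnothing}\circ G_k)\circ(R_k\circ G_{\varnothing})$ collapses via $G_k\circ R_k=\operatorname{id}$ and then $R_\varnothing\circ G_\varnothing=\operatorname{id}$. The reverse composition collapses symmetrically. Then
\[ G_{\varnothing}\circ F_k^{-1}\circ R_{\varnothing}=G_{\varnothing}\circ R_k\circ (G_{\varnothing}\circ R_{\varnothing})=G_{\varnothing}\circ R_k=D_k. \]

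\textbf{Composition.} Write $E_{k_2}\circ E_{k_1}=(G_{\varnothing} F_{k_2} R_{\varnothing})(G_{\varnothing} F_{k_1} R_{\varnothing})$ and cancel the middle factor $R_{\varnothing} G_{\varnothing}=\operatorname{id}_{\mathcal{X}_n}$.

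There is no real obstacle here. The only point needing care is keeping track of which identity of Lemma~\ref{lem:rank-trace-inversion} applies on which space, $\mathcal{X}_n$ versus $V^n$, so that each cancellation is legitimate.
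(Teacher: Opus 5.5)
Your proof is correct and follows the paper's argument essentially step by step: in each identity you cancel $R_{\varnothing}\circ G_{\varnothing}=\operatorname{id}_{\mathcal{X}_n}$ or $G_{\varnothing}\circ R_{\varnothing}=\operatorname{id}_{V^n}$ using Lemma~\ref{lem:rank-trace-inversion}. The only addition is that you explicitly verify the inverse formula $F_k^{-1}=R_k\circ G_{\varnothing}$ in both directions, which the paper only asserts as easily checked before stating the proposition.
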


\begin{proof}
First, for any $r\in\mathcal{X}_n$,
\begin{align*}
(R_{\varnothing}\circ E_k\circ G_{\varnothing})(r)
&=R_{\varnothing}\left(G_k(R_{\varnothing}(G_{\varnothing}(r)))\right)\\
&=R_{\varnothing}(G_k(r))=F_k(r).
\end{align*}
This proves $F_k=R_{\varnothing}\circ E_k\circ G_{\varnothing}$.
Next, composing on the left by $G_{\varnothing}$ and on the right by
$R_{\varnothing}$ gives
\[E_k=G_{\varnothing}\circ F_k\circ R_{\varnothing}.\]

Using the inverse formula $F_k^{-1}(w)=R_k(G_{\varnothing}(w))$, 
we obtain, for $y\in V^n$,
\begin{align*}
(G_{\varnothing}\circ F_k^{-1}\circ R_{\varnothing})(y)
&=G_{\varnothing}\left(R_k(G_{\varnothing}(R_{\varnothing}(y)))\right)\\
&=G_{\varnothing}(R_k(y)) =D_k(y).
\end{align*}

The composition identity for $E_{k_2}\circ E_{k_1}$ follows by applying
the expression $E_k=G_{\varnothing}\circ F_k\circ R_{\varnothing}$
twice and using $R_{\varnothing}\circ G_{\varnothing}=\operatorname{id}_{\mathcal{X}_n}$. 
\end{proof}
\begin{rmk}[Keys versus key-coordinate vectors]
The formal key is always a token context $k\in C^*$. However, in examples, it is
sometimes convenient to describe a key text by its empty-context rank
coordinates. If $q\in\mathcal{X}_t$, then
\[k=G_{\varnothing}(q)\in V^t\subseteq C^*\]
is the corresponding key context. Thus, $q$ is not itself the key; it is
a coordinate representation of a key context whose tokens lie in the
admissible output vocabulary $V$. 
Key collisions are always
collisions between contexts $k_1,k_2\in\mathcal{K}$, 
but can be displayed through the coordinate vectors $q_1,q_2$.
\end{rmk}

\subsection{Definitions and Computational Problems}

With the above notation and formalism in place, we can now introduce the
main computational problems and security notions associated with a
\CARTS protocol. We begin by defining several forms of message
equivocation, which capture the extent to which an observed stegotext
admits multiple plausible payload interpretations under different keys.
These notions formalize deniability properties of the protocol and will
serve as the foundation for the computational problems introduced later
in this section.

\begin{defn}[Full message equivocation]\label{Def_Full_Messsage_Equivocation}
We say that a \CARTS protocol $(E_k, D_k)$ satisfies \textit{full message equivocation} on
$V^n$ if, for every observed stegotext $y\in V^n$ and every payload
$x\in V^n$, there exists a key $k\in\mathcal{K}$ such that
\[ D_k(y)=x. \]
\end{defn}

\begin{defn}[Weak $\ell$-message equivocation]
\label{Def_Weak_Messsage_Equivocation}
For an integer $\ell\leq |V|^n$, 
we say that a \CARTS protocol $(E_k, D_k)$ satisfies 
\textit{weak $\ell$-message equivocation} on $V^n$ if, for every
observed stegotext $y\in V^n$,
there exist distinct payloads
\[x_1,...,x_\ell\in V^n\]
and keys
\[k_1,...,k_\ell\in\mathcal{K}\]
such that
\[D_{k_i}(y)=x_i,\qquad \forall i=1,...,\ell .\]
\end{defn}

Message equivocation (whether it is full or weak) can also be referred to as encryption deniability (in the spirit of \cite{canetti1997deniable,durmuth2011deniable}). 
Intuitively, this property means that a given key $k$ does not uniquely determine a single underlying message. Instead, multiple plausible payloads may correspond to the same observed covertext when different keys are used. Consequently, even if a computationally unbounded adversary intercepts $k$, they cannot definitively determine which message was encoded, since several alternative messages remain consistent with the observed data under different keys. A party may thus plausibly claim that a different message was intended. 
Moreover, full message equivocation implies weak $\ell$-message equivocation (for $\ell$ within the bound of existing messages in $|V|^n$).

\begin{rmk}
The practical relevance of message equivocation depends strongly on the
threat model and the surrounding social or political environment.
Indeed, in highly coercive settings, such as under an authoritarian
regime, the mere existence of a key capable of explaining a stegotext as
corresponding to an incriminating or ``undesirable'' payload may itself
be sufficient to justify retaliation, regardless of whether alternative
benign explanations also exist. In such contexts, message equivocation
alone may therefore provide limited practical protection.

By contrast, in more typical institutional settings governed 
by stronger procedural protections,
where individuals are not presumed guilty solely on the basis of a 
possible incriminating interpretation, equivocation can provide a
meaningful form of deniability. In such cases, the existence of a
plausible alternative payload consistent with the same observed
stegotext may be sufficient to prevent definitive attribution.
Consequently, the operational significance of equivocation depends not
only on the mathematical properties of the protocol, but also on the
standards of evidence and coercion present in the intended application
environment.
\end{rmk}

For completeness, we also introduce a restricted variant of message
equivocation in which the admissible payloads, observed stegotexts, and
keys are constrained to prescribed subsets. This formulation is useful
in practical settings where only certain payloads are considered
plausible, only certain stegotexts are observable, or only a restricted
family of keys is operationally realistic.

\begin{defn}[Restricted-support equivocation]
Let $\mathcal{P}_{\mathrm{adm}}\subseteq V^n$ be a set of admissible payloads, let
$\mathcal{Y}_{\mathrm{adm}}\subseteq V^n$ be a set of admissible observed
stegotexts, and let $\mathcal{K}_{\mathrm{adm}}\subseteq\mathcal{K}$ be
a set of admissible or plausible keys. The protocol has \textit{restricted-support full message equivocation} relative to
$(\mathcal{P}_{\mathrm{adm}},\mathcal{Y}_{\mathrm{adm}},\mathcal{K}_{\mathrm{adm}})$ if, for every
$x\in\mathcal{P}_{\mathrm{adm}}$ and every $y\in\mathcal{Y}_{\mathrm{adm}}$, there exists
$k\in\mathcal{K}_{\mathrm{adm}}$ such that
\[D_k(y)=x.\]

For an integer $\ell\leq |\mathcal{P}_{\mathrm{adm}}|$, it has \textit{restricted-support weak $\ell$-message equivocation} relative to
$(\mathcal{P}_{\mathrm{adm}},\mathcal{Y}_{\mathrm{adm}},\mathcal{K}_{\mathrm{adm}})$ if, for every
$y\in\mathcal{Y}_{\mathrm{adm}}$, there exist distinct
$x_1,...,x_\ell\in\mathcal{P}_{\mathrm{adm}}$ and keys
$k_1,...,k_\ell\in\mathcal{K}_{\mathrm{adm}}$ such that
\[D_{k_i}(y)=x_i, \qquad \forall i=1,...,\ell .\]
\end{defn}

Having introduced the relevant security definitions, we now
turn to the computational problems naturally associated with a \CARTS
protocol. These problems formalize the algorithmic tasks underlying the
security and deniability properties discussed above.
We work with the same setup as Section \ref{sec:carts-protocol}, and we assume that our large language model $\mathcal{M}$ is fixed.

For unconditional properties, we allow the key space
$\mathcal{K}\subseteq C^*$ to be arbitrary. For algorithmic search
problems, however, the key domain must be specified more concretely, 
for instance, by restricting to a finite admissible key set
$\mathcal{K}_{\mathrm{adm}}\subseteq\mathcal{K}$, or by giving a specified key-generation distribution with an explicit search budget. 
Without such a restriction, non-existence over all of $C^*$ is not a
finite computational task.

For convenience, we use the rank-coordinate notation introduced in Section \ref{sec:rank_coordinate_notation} to emphasize the \textit{action} of the key $k$ on the messages, viewed as rank vectors.

\begin{defn}[Key collision]\label{Def_Key_Collision}
Fix a model $\mathcal{M}$, a key space $\mathcal{K}$, and a length
$n$. For a rank vector $r\in\mathcal{X}_n$, two distinct keys
$k_1,k_2\in\mathcal{K}$ form a \textit{key collision} if
\[F_{k_1}(r)=F_{k_2}(r).\]
Equivalently, the same payload rank vector $r$ 
maps to the same stegotext rank vector under two different keys. 
\end{defn}

\begin{prob}[The key collision problem]\label{Prob_Key_Collision}
Given $r\in\mathcal{X}_n$ and an admissible key set
$\mathcal{K}_{\mathrm{adm}}\subseteq\mathcal{K}$, find distinct keys
$k_1,k_2\in\mathcal{K}_{\mathrm{adm}}$ such that
\[F_{k_1}(r)=F_{k_2}(r),\]
or determine that no such pair exists within
$\mathcal{K}_{\mathrm{adm}}$.
\end{prob}

\begin{prob}[The promise context search problem]
\label{Prob_Standard_Vectorization}
Given $r,w\in\mathcal{X}_n$ and an admissible key set
$\mathcal{K}_{\mathrm{adm}}\subseteq\mathcal{K}$, with the promise
that there exists $k\in\mathcal{K}_{\mathrm{adm}}$ satisfying
\[ F_k(r)=w, \]
find one such key $k$.
\end{prob}

Problem \ref{Prob_Standard_Vectorization} is the most direct known-message key-search problem. It is very close
in spirit to the \textit{vectorization problem} (see Problem 6 in \cite{couveignes2006hard})
because one is searching for a key
that transports one rank vector to another. 
The main difference with the usual vectorization problem is that the key space does not carry a natural algebraic composition law (as in \cite{alamati2020cryptographic}), and the family $\{F_k\}$ is not assumed to form a group action. The context-search problem is therefore a transport problem on rank-vector space without the algebraic structure that underlies the vectorization setting.

\begin{prob}[The unconditional context search problem]\label{Prob_Uncond_Vectorization}
Given $r,w\in\mathcal{X}_n$ and an admissible key set
$\mathcal{K}_{\mathrm{adm}}\subseteq\mathcal{K}$, find a key 
$k\in\mathcal{K}_{\mathrm{adm}}$ such that
\[ F_k(r)=w. \]
\end{prob}

\begin{prob}[The unconditional partial $\ell$-context search problem]
\label{Prob_Uncond_Partial_Vectorization}
Given $w\in\mathcal{X}_n$ and an admissible key set
$\mathcal{K}_{\mathrm{adm}}\subseteq\mathcal{K}$, find $\ell$ rank vectors $r_1, ..., r_\ell \in\mathcal{X}_n$ and keys
$k_1, ..., k_\ell \in\mathcal{K}_{\mathrm{adm}}$ such that
\[ F_{k_i}(r_i)=w. \]
\end{prob}

Note, of course, that for an arbitrary CARTS protocol, a solution to Problems \ref{Prob_Uncond_Vectorization} or \ref{Prob_Uncond_Partial_Vectorization} need not exist. 
Furthermore, it is clear that Problem \ref{Prob_Uncond_Partial_Vectorization} is not harder than Problem \ref{Prob_Uncond_Vectorization}.
With the above notions in hand, we can discuss message equivocation for a \CARTS protocol. 
This will be done in the following subsection.

Finally, for completeness, we mention a more general version of Problem \ref{Prob_Uncond_Vectorization} that will be addressed further in subsequent work, when building a different primitive.

\begin{prob}[Pre-image resistance problem]\label{Prob_Binding}
Given $w\in\mathcal{X}_n$ and an admissible key set
$\mathcal{K}_{\mathrm{adm}}\subseteq\mathcal{K}$, find a key 
$k\in\mathcal{K}_{\mathrm{adm}}$ and a rank vector $r\in\mathcal{X}_n$ such that
\[ F_k(r)=w. \]
\end{prob}

Although this problem looks similar to Problem \ref{Prob_Uncond_Vectorization}, the {key} difference is that the message $r$ is not fixed here. 
More generally, all the above definitions and problems look similar but differ in very subtle ways, as they target different privacy concerns and applications.

\begin{rmk}
\begin{enumerate}[label=(\arabic*), leftmargin=*]
\item 
An unconstrained preimage-search problem of the form ``given $w$, find
$r$ and $k$ such that $F_k(r)=w$'' is trivial whenever the empty context
is admissible: take $r=w$ and $k=\varnothing$. 
One can thus consider additional constraints, such as $r\neq w$, non-empty
keys, plausible keys, plausible payloads, or bounded key length.
\item 
If the empty context $\varnothing$ is included in the admissible key
space $\mathcal{K}$, then $F_{\varnothing}(r)=r$ 
for every $r\in\mathcal{X}_n$. 
Therefore, every stegotext $y$ has the trivial
explanation $E_{\varnothing}(y)=y$.
Consequently, the unconditional partial $1$-context search problem 
(and hence, weak $1$-message equivocation, as we will see in Section \ref{sec:Security_Properties_of_the_Protocol}) is trivial whenever the
empty key is admissible. Nontrivial equivocation claims 
should therefore impose additional constraints. 
\end{enumerate}
\end{rmk}

Lastly, we note that all of the computational problems 
introduced above can be defined for a
general \CARTS protocol in terms of the encoding and decoding maps
$E_k$ and $D_k$ alone. The rank-coordinate formulation via
$F_k:\mathcal{X}_n \to \mathcal{X}_n$ is specific to the Calgacus
instantiation and its induced rank representation, and is introduced
primarily for convenience of analysis and notation. The underlying
problem structure, however, is independent of this representation.

\subsection{Security Properties of the Protocol} \label{sec:Security_Properties_of_the_Protocol}

\subsubsection{Message Equivocation}

\begin{theorem}
\label{thm:equivocation-vectorization}
A \CARTS protocol has full message equivocation on
$V^n$ if and only if 
the unconditional context search problem 
(Problem~\ref{Prob_Uncond_Vectorization}) is solvable for all $r,w\in\mathcal{X}_n$. 

Similarly, for an integer $\ell\leq |\mathcal{X}_n|$, 
it has weak $\ell$-message equivocation on $V^n$
if and only if the unconditional partial $\ell$-context search problem
(Problem~\ref{Prob_Uncond_Partial_Vectorization}) is solvable for every $w\in\mathcal{X}_n$.
\end{theorem}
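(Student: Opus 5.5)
The plan is to translate both equivocation definitions into rank coordinates using Proposition~\ref{prop:rank-coordinate-conjugacy}, and then read off each equivalence directly. Here ``solvable'' means that a solution exists, with $\mathcal{K}_{\mathrm{adm}}=\mathcal{K}$.

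The key observation is the following. Take $x,y\in V^n$ and set $r:=R_{\varnothing}(x)$ and $w:=R_{\varnothing}(y)$. Then for every key $k$,
\[
D_k(y)=x \iff F_k(r)=w .
\]
To prove it, use $D_k=G_{\varnothing}\circ F_k^{-1}\circ R_{\varnothing}$ together with Lemma~\ref{lem:rank-trace-inversion}. The equation $D_k(y)=x$ reads $G_{\varnothing}(F_k^{-1}(w))=x$. Applying $R_{\varnothing}$ to both sides turns this into $F_k^{-1}(w)=r$, which is equivalent to $F_k(r)=w$ because $F_k$ is a bijection. Each step is reversible, since $G_{\varnothing}$ and $R_{\varnothing}$ are mutually inverse bijections between $V^n$ and $\mathcal{X}_n$. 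That same fact shows the correspondences $x\leftrightarrow r$ and $y\leftrightarrow w$ are bijections, so quantifying over all payloads and stegotexts is the same as quantifying over all rank vectors.

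For the full equivocation statement, assume full message equivocation and fix arbitrary $r,w$. Put $x=G_{\varnothing}(r)$ and $y=G_{\varnothing}(w)$. Full equivocation supplies a key $k$ with $D_k(y)=x$, and the key observation then gives $F_k(r)=w$. Conversely, suppose a solution exists for every pair $(r,w)$. Given $x,y$, take the rank vectors $R_{\varnothing}(x)$ and $R_{\varnothing}(y)$, obtain the key $k$ that solves the search problem for them, and apply the key observation to conclude $D_k(y)=x$.

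The weak $\ell$ statement follows the same pattern. Given $w$, set $y=G_{\varnothing}(w)$. Weak equivocation yields distinct payloads $x_i$ with keys $k_i$, and we take $r_i=R_{\varnothing}(x_i)$. In the other direction, set $x_i=G_{\varnothing}(r_i)$.

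The only real obstacle is distinctness. Definition~\ref{Def_Weak_Messsage_Equivocation} requires the payloads $x_i$ to be distinct, but Problem~\ref{Prob_Uncond_Partial_Vectorization} does not explicitly require the $r_i$ to be distinct. I would read the problem as asking for $\ell$ distinct rank vectors, since that is the only reading under which ``$\ell$'' is meaningful. Under that reading, injectivity of $R_{\varnothing}$ and of $G_{\varnothing}$ transfers distinctness in both directions. I would state this interpretation explicitly in the proof.

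A second point to check is the bound $\ell\le|\mathcal{X}_n|=|V|^n$. This matches the bound in the definition of weak $\ell$-message equivocation, so the two sides of the equivalence range over the same values of $\ell$ and nothing further is needed.
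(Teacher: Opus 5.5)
Your proposal is correct and follows essentially the same route as the paper. Both proofs reduce everything to the pointwise equivalence $D_k(y)=x \iff F_k(r)=w$ for $r=R_{\varnothing}(x)$, $w=R_{\varnothing}(y)$ (the paper derives it via $R_k(y)=r$, you via the conjugated form $D_k=G_{\varnothing}\circ F_k^{-1}\circ R_{\varnothing}$), and then transfer the quantifiers through the bijection $R_{\varnothing}$. Your explicit choices $\mathcal{K}_{\mathrm{adm}}=\mathcal{K}$ and distinct $r_i$ in Problem~\ref{Prob_Uncond_Partial_Vectorization} make precise what the paper leaves implicit in ``applying the same argument to $\ell$ distinct pairs.''
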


\begin{proof} 
The result follows directly from unpacking the definitions. 
For notational convenience, we present the proof for the Calgacus construction; the same argument applies to any \CARTS protocol. 
Recall that $R_{\varnothing}:V^n \to \mathcal{X}_n$ is a bijection with inverse $G_{\varnothing}$, and that $F_k = R_{\varnothing} \circ E_k \circ G_{\varnothing}$. 
Let $x,y \in V^n$ and define $r = R_{\varnothing}(x)$ and $w = R_{\varnothing}(y)$. Then, 
\[D_k(y)=x \;\Longleftrightarrow\; 
G_{\varnothing}(R_k(y)) = G_{\varnothing}(r) \;\Longleftrightarrow\; R_k(y)=r. \]
Substituting $y = G_{\varnothing}(w)$ and using the definition of $F_k$, this is equivalent to
$F_k(r)=w$. 
This establishes the equivalence between the full-message equivocation condition and the existence of a key solving the unconditional context search problem in rank space.

The weak $\ell$-message statement follows by applying the same argument to $\ell$ distinct pairs.
\end{proof}

We now state a direct consequence of the above characterization, which extends to the restricted-support setting via the same rank-coordinate transformation. 

\begin{corollary}
Let $\mathcal{R}_{\mathcal{P}}:=R_{\varnothing}(\mathcal{P}_{\mathrm{adm}})$ and $\mathcal{R}_{\mathcal{Y}}:=R_{\varnothing}(\mathcal{Y}_{\mathrm{adm}})$, 
then the Calgacus \CARTS protocol 
has restricted-support full message equivocation relative
to $(\mathcal{P}_{\mathrm{adm}},\mathcal{Y}_{\mathrm{adm}},\mathcal{K}_{\mathrm{adm}})$ if and
only if
\[ \forall r\in\mathcal{R}_{\mathcal{P}}, \forall w\in\mathcal{R}_{\mathcal{Y}}, 
\quad \exists k\in\mathcal{K}_{\mathrm{adm}}  \quad\text{such that}\quad F_k(r)=w. \]
Similarly, it has restricted-support weak $\ell$-message equivocation
if and only if, for every $w\in\mathcal{R}_{\mathcal{Y}}$, there exist
distinct $r_1,\ldots,r_\ell\in\mathcal{R}_{\mathcal{P}}$ and keys
$k_1,\ldots,k_\ell\in\mathcal{K}_{\mathrm{adm}}$ such that $F_{k_i}(r_i)=w$ for all $i=1,...,\ell$.
\end{corollary}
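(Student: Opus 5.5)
The plan is to reuse the pointwise equivalence from the proof of Theorem~\ref{thm:equivocation-vectorization}, namely that for $x,y\in V^n$ and $k\in\mathcal{K}$,
\[ D_k(y)=x \iff F_k(R_{\varnothing}(x))=R_{\varnothing}(y), \]
and then transport the quantifiers through the bijection $R_{\varnothing}:V^n\to\mathcal{X}_n$ (with inverse $G_{\varnothing}$, by Lemma~\ref{lem:rank-trace-inversion}). First I would re-derive this equivalence briefly. By definition $D_k(y)=G_{\varnothing}(R_k(y))$, and $G_{\varnothing}$ is injective, so $D_k(y)=x$ holds iff $R_k(y)=R_{\varnothing}(x)$. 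Applying the bijection $F_k^{-1}=R_k\circ G_{\varnothing}$, together with $y=G_{\varnothing}(R_{\varnothing}(y))$, turns this into $F_k(R_{\varnothing}(x))=R_{\varnothing}(y)$. Nothing in this step involves the admissible sets, so it holds for every $k$, and in particular for every $k\in\mathcal{K}_{\mathrm{adm}}$.

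For the full-equivocation statement, I would note that $R_{\varnothing}$ restricts to bijections $\mathcal{P}_{\mathrm{adm}}\to\mathcal{R}_{\mathcal{P}}$ and $\mathcal{Y}_{\mathrm{adm}}\to\mathcal{R}_{\mathcal{Y}}$. This holds because $R_{\varnothing}$ is injective and the sets $\mathcal{R}_{\mathcal{P}},\mathcal{R}_{\mathcal{Y}}$ are defined as images. So quantifying over $x\in\mathcal{P}_{\mathrm{adm}}$ is the same as quantifying over $r=R_{\varnothing}(x)\in\mathcal{R}_{\mathcal{P}}$, and likewise for $y$ and $w$. Substituting into the definition of restricted-support full equivocation, and applying the pointwise equivalence to each pair $(x,y)$, gives exactly the displayed condition. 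The existential over $k\in\mathcal{K}_{\mathrm{adm}}$ is untouched.

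For the weak $\ell$ statement, I would do the same translation. The one point needing care is that distinctness is preserved: the $x_i$ are distinct iff the $r_i=R_{\varnothing}(x_i)$ are distinct, by injectivity of $R_{\varnothing}$. Membership $x_i\in\mathcal{P}_{\mathrm{adm}}$ corresponds to $r_i\in\mathcal{R}_{\mathcal{P}}$. Conversely, given $r_i\in\mathcal{R}_{\mathcal{P}}$, the payloads $x_i=G_{\varnothing}(r_i)$ lie in $\mathcal{P}_{\mathrm{adm}}$, since $r_i=R_{\varnothing}(x)$ for some $x\in\mathcal{P}_{\mathrm{adm}}$ and $G_{\varnothing}(R_{\varnothing}(x))=x$. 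The bound $\ell\le|\mathcal{P}_{\mathrm{adm}}|=|\mathcal{R}_{\mathcal{P}}|$ transfers for the same reason.

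I expect no real obstacle. The only step requiring attention is this bookkeeping of images versus preimages under $R_{\varnothing}$, which the explicit inverse $G_{\varnothing}$ settles.
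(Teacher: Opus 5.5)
Your proposal is correct and follows the route the paper intends. The paper gives no separate proof; it only says the corollary follows from the pointwise equivalence $D_k(y)=x \iff F_k(R_{\varnothing}(x))=R_{\varnothing}(y)$ in the proof of Theorem~\ref{thm:equivocation-vectorization}, carried through the bijection $R_{\varnothing}$ with the quantifiers restricted to the admissible sets. Your explicit checks that distinctness of the $x_i$ and membership in $\mathcal{P}_{\mathrm{adm}}$ correspond exactly to distinctness of the $r_i$ and membership in $\mathcal{R}_{\mathcal{P}}$ supply bookkeeping the paper leaves implicit.
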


\subsubsection{Key collision} \label{subsection_Key_Collisions}
Let us now discuss Problem~\ref{Prob_Key_Collision} more closely. 
Key collisions do not affect correctness: for any fixed
key $k$, the decoder still inverts the encoder by
Theorem~\ref{thm:carts-correctness}. They are however interesting from a
security standpoint. 

Suppose an adversary knows a payload-stegotext pair $(x,y)$, or
equivalently the rank-coordinate pair
\[ r=R_{\varnothing}(x), \qquad w=R_{\varnothing}(y). \]
Denote the set of keys consistent with this observation by 
\[ \mathsf{ValidKeys}(r,w):=\{k\in\mathcal{K}:F_k(r)=w\}. \]
If $|\mathsf{ValidKeys}(r,w)|>1$, then the key is not
information-theoretically identifiable (even for a computationally unbounded adversary) 
from this single transcript, and thus cannot be used to decode future messages from the sender. 
In practice however, this is slightly weaker than saying that future communications are secure, 
as additional known plaintexts, side information about the key, or a prior
over plausible prompts may reduce (or even eliminate) the candidate set.
Indeed, as additional payload-stegotext pairs are observed, 
the candidate key set can only shrink, since each new transcript 
imposes an additional constraint of the form $F_k(r_i) = w_i$.

For a finite key set $\mathcal{K}'\subseteq\mathcal{K}$, define the
collision multiplicity
\[ \mu_{\mathcal{K}'}(r,w) := \left|\{k\in\mathcal{K}':F_k(r)=w\}\right |. \]
Thus, $\mu_{\mathcal{K}'}(r,w)=0$ means that $w$ is unreachable from
$r$ using keys in $\mathcal{K}'$, $\mu_{\mathcal{K}'}(r,w)=1$ means
that the key is identifiable within $\mathcal{K}'$ from this single
pair, and $\mu_{\mathcal{K}'}(r,w)>1$ means that the pair admits a key
collision within $\mathcal{K}_{\mathrm{adm}}$.

If keys are generated by a distribution $\mathsf{KeyGen}$, 
a distributional collision probability for fixed $r$ is
\[ \operatorname{Coll}(r) 
= \Pr_{k_1,k_2\leftarrow \mathsf{KeyGen}} \left[ k_1\neq k_2 \ \wedge\ F_{k_1} r)=F_{k_2}(r) \right]. \]
This quantity is different from the existence of a collision. A single
collision may exist while having negligible probability under the key
distribution, whereas a large collision probability indicates that key
\textit{non-identifiability} is common for the chosen key-generation procedure.

Here, we present an initial investigation of the key-collision behavior in the Calgacus \CARTS protocol, leaving more extensive analysis for future work. 
Preliminary experiments suggest that key collisions are unlikely but can still occur in
specific finite key searches. 

All numeric ranks in the following example are computed using the Llama-3-8B-Instruct GGUF model. Since the underlying sentences (token sequences) are explicitly provided, the example can be reproduced (with potentially different values for $q_1$ and $q_2$). 
\begin{ex} \label{ExampleOfCollision}
Using a specific version of the Llama-3-8B-Instruct GGUF model, 
we can observe that the following two key-coordinate vectors 
\begin{align*}
q_1&=[88599,1087,1,1,1], \\
q_2&=[88599,1087,47,646,1]
\end{align*}
both map the rank-coordinate vector $r:=[1, 1, 1, 1, 1]$ to $w:=[126444, 1, 3736, 2, 4]$, 
thus producing a collision. 
Letting 
\[k_1:=G_{\varnothing}(q_1) \qquad\text{and}\qquad k_2:=G_{\varnothing}(q_2) \]
denote the corresponding key contexts, we write 
\[ F_{k_1}(r)= F_{k_2}(r)=w. \]
While this is \textit{a priori} intriguing, examining the corresponding sentences reveals only a minor (typographical) difference between the keys' token sequences:
\begin{align*}
\begin{split}
G_{\varnothing}(q_1)&=\text{\ttfamily The quick brown fox jumps}\\
G_{\varnothing}(q_2)&=\text{\ttfamily The quick brow fox jumps}\\
G_{\varnothing}(w)&=\text{\ttfamily over the lazy dog.}
\end{split}
\end{align*}
Notice the typo in $G_{\varnothing}(q_2)$, and that all vectors above have the same length, corresponding to sentences with an equal number of tokens (the period at the end of $G_{\varnothing}(w)$ is one of the tokens).
\end{ex}

In view of our previous example, one can ask if two distinct but nearly identical keys will 
encode a message into two nearly identical stegotexts (and decode a stegotext into two nearly identical messages). 
In general, for a key collision to provide meaningful security for the sender, 
the alternative keys must decode the new stegotexts to \textit{semantically distinct messages}. 
If all collisions correspond only to minor variations (e.g., typos or trivial rewordings), an adversary could easily recover the original key by correcting these minor differences. But this only makes sense if such a correction only affects stegotexts minimally. 

Additional considerations are whether the alternative keys are plausible 
under the adversary's key prior, how large the candidate set $\mathsf{ValidKeys}(r,w)$ is, and whether the
candidate keys remain consistent across additional transcripts.

This motivates natural empirical questions: how often do nontrivial
collisions occur, how large can $|\mathsf{ValidKeys}(r,w)|$ become for a
given pair $(r,w)$, and how quickly do candidate keys separate when
tested on additional messages? We investigate these questions in
Section~\ref{EmpiricalExplorations_Keycollision1}.

Although the two keys in Example~\ref{ExampleOfCollision} are
semantically very close, this does not by itself determine whether the
collision is cryptographically meaningful. A small textual perturbation
in a key (as in Example~\ref{ExampleOfCollision}) may either remain localized 
(leading to a collision for only a single payload) or may lead to substantially
different behavior on a multitude of payloads. We therefore distinguish
\emph{local key collisions}, which occur only for a specific 
$r$, from stronger forms of key ambiguity in which the same key pairs 
remain indistinguishable globally, across many payloads. We study this stability
question empirically in Section~\ref{EmpiricalExplorations_Keycollision2} 
and see that even though the keys $k_1$ and $k_2$ are nearly identical, and agree 
on the input $[1,1,1,1,1]$, they disagree on other random inputs. 

\begin{rmk}
If one does not restrict the lengths of the keys, nor require them to be equal, 
it becomes considerably easier to find collisions, 
for instance, by looking for extremely common phrases that end with the same words, such as: ``\textit{the capital city of Italy is \textbf{Rome}}'' and ``\textit{all roads lead to \textbf{Rome}}''. 
For this reason, collision
experiments should specify the admissible key set
$\mathcal{K}_{\mathrm{adm}}$, the key-length policy, and the key-generation
distribution. The relevant empirical quantities are not merely whether
a collision exists, but the size of the candidate fiber
$\mu_{\mathcal{K}_{\mathrm{adm}}}(r,w)$ and the probability that independently
generated keys collide.
\end{rmk}

\subsubsection{Finite cardinality effects}
We now revisit the previous problem from a new perspective. 
Fix a payload rank
vector $r\in\mathcal{X}_n$ and let 
$\mathcal{K}_{\mathrm{adm}}:= \mathcal{X}_n = [N]^n$, 
that is, we assume that the keys have the same length $n$ as the input message. 
Define
\begin{align*}
f_r:\mathcal{X}_n&\longrightarrow\mathcal{X}_n,\\
f_r(k)&:=F_k(r).
\end{align*}
The fiber
\[ f_r^{-1}\{w\}=\{k\in\mathcal{X}_n:F_k(r)=w\} \]
is the set of keys that map payload rank vector $r$ to stegotext rank
vector $w$.

This allows us to rephrase the existence of key collisions for a given $r\in\mathcal{X}_n$ 
(Problem \ref{Prob_Key_Collision}) in terms of $f_r$ being \textit{non-injective}. 
On the other hand, we see that the unconditional context search problem 
(Problem \ref{Prob_Uncond_Vectorization}) has a solution 
if and only if the functions $f_r$ are \textit{surjective} for all $r$. 
In particular, we have the following result. 

\begin{prop}\label{prop:finite-key-cardinality}
For fixed $r\in\mathcal{X}_n$, the following are equivalent:
\begin{enumerate}
    \item $f_r$ is injective.
    \item $f_r$ is surjective.
    \item There are no key collisions for this fixed $r$ within
    $\mathcal{K}_{\mathrm{adm}}$.
    \item For every $w\in\mathcal{X}_n$, there exists a key
    $k\in\mathcal{K}_{\mathrm{adm}}$ such that $F_k(r)=w$.
\end{enumerate}
Consequently, if a key collision exists for some output $w$, then
$f_r$ is not injective and hence not surjective. Therefore, at least
one output $w'\in\mathcal{X}_n$ is unreachable from $r$ using keys in
$\mathcal{K}_{\mathrm{adm}}$. However, for the colliding output $w$ itself, the
context-search instance $F_k(r)=w$ is solvable and has more than one
solution.
\end{prop}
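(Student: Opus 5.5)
The plan is to reduce everything to the pigeonhole principle for a self-map of a finite set. With $\mathcal{K}_{\mathrm{adm}}=\mathcal{X}_n=[N]^n$, identified with key contexts via $k=G_{\varnothing}(q)$ as in the remark on key-coordinate vectors, the map $f_r:\mathcal{X}_n\to\mathcal{X}_n$ has domain and codomain of the same finite cardinality $N^n$. The first step is to note that $G_{\varnothing}$ is a bijection from $\mathcal{X}_n$ onto $V^n$ by Lemma~\ref{lem:rank-trace-inversion}. Hence distinct coordinate vectors give distinct key contexts, and counting keys is the same as counting coordinate vectors. This identification is the one point that needs care, and I expect it to be the only real obstacle. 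Without it, $f_r$ could fail to be well defined as a map on $N^n$ distinct keys, and the cardinality argument would break.

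Next I would prove (1)$\Leftrightarrow$(2) by the standard finite-set argument. If $f_r$ is injective, its image has $N^n$ elements inside a set of size $N^n$, so the image is everything. Conversely, if $f_r$ is surjective, each of the $N^n$ fibers $f_r^{-1}\{w\}$ is nonempty. These fibers partition a set of size $N^n$, so each fiber has exactly one element, which means $f_r$ is injective.

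I would then match the remaining conditions to the definitions. Condition (3) is a restatement of (1): by Definition~\ref{Def_Key_Collision}, a key collision for $r$ is a pair $k_1\neq k_2$ with $f_r(k_1)=f_r(k_2)$. Condition (4) is a restatement of (2), since $f_r(k)=F_k(r)$.

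Finally, for the consequence, suppose $k_1\neq k_2$ satisfy $F_{k_1}(r)=F_{k_2}(r)=w$. Then $f_r$ is not injective, so by the equivalence it is not surjective, and some $w'\in\mathcal{X}_n$ lies outside the image of $f_r$, i.e.\ $w'$ is unreachable from $r$. Meanwhile the fiber $f_r^{-1}\{w\}$ contains both $k_1$ and $k_2$, so the context-search instance $F_k(r)=w$ is solvable and has at least two solutions.
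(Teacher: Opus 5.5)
Your proof is correct and follows the paper's argument: (1)$\Leftrightarrow$(2) from equal finite cardinality, (3) and (4) as direct restatements of injectivity and surjectivity, and the consequence read off from a fiber containing two distinct keys. Your extra check that $G_{\varnothing}$ is a bijection onto $V^n$, so that distinct coordinate vectors give distinct key contexts, is left implicit in the paper, but it is what makes a key collision in (3) coincide exactly with non-injectivity in (1).
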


\begin{proof}
The equivalence between injectivity and surjectivity follows from the
fact that $f_r$ is a map between two finite sets of equal cardinality.
Condition 3 is the statement that no two distinct keys map $r$ to the same output,
which is injectivity. 
Condition 4 is exactly the statement that every output has a 
preimage under $f_r$, which is equivalent to surjectivity. 
The final claim follows immediately: a collision
makes $f_r$ non-injective, and therefore non-surjective, but the
particular output at which the collision occurs has at least two
preimages.
\end{proof}

Proposition~\ref{prop:finite-key-cardinality} and
Corollary~\ref{cor:key-length-cardinality} (below) highlight 
the natural tension between the existence of key collisions and the existence of a solution to the context search problem, i.e. the tension between Problems \ref{Prob_Key_Collision} and \ref{Prob_Uncond_Vectorization}. 
For a fixed payload rank vector $r$ and an
equal-size finite key space, one cannot have both a collision-free map
and missing outputs. 

\begin{corollary}[Key-length cardinality regimes]
\label{cor:key-length-cardinality}
Let $\mathcal{A}\subseteq C$ be an admissible key alphabet of size
$M:=|\mathcal{A}|$, and let $\mathcal{K}_t:=\mathcal{A}^t$ 
be the set of all key contexts of length $t$ over $\mathcal{A}$. Fix
$r\in\mathcal{X}_n$, and define
\[ f_r:\mathcal{K}_t\to\mathcal{X}_n, \qquad f_r(k)=F_k(r). \]
Then:
\begin{enumerate}
    \item If $M^t<N^n$, then $f_r$ cannot be surjective. Hence some
    output rank vectors are unreachable from $r$ using keys in
    $\mathcal{K}_t$.
    \item If $M^t>N^n$, then $f_r$ cannot be injective. Hence at least
    one key collision exists for this fixed $r$ within $\mathcal{K}_t$.
    \item If $M^t=N^n$, then injectivity, surjectivity, unique
    reachability, and absence of key collisions are equivalent, as in
    Proposition~\ref{prop:finite-key-cardinality}.
\end{enumerate}
In the special case $M=N$, the regimes are $t<n$, $t=n$, and $t>n$.
Thus, shorter-than-message key spaces cannot reach every output for a
fixed payload rank vector, while longer-than-message key spaces force
collisions by cardinality alone.
\end{corollary}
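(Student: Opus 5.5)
The plan is a direct cardinality argument for the map $f_r:\mathcal{K}_t\to\mathcal{X}_n$, $f_r(k)=F_k(r)$. First we record the two cardinalities: $|\mathcal{K}_t|=|\mathcal{A}^t|=M^t$, and $|\mathcal{X}_n|=|[N]^n|=N^n$. The map $f_r$ is well defined on $\mathcal{K}_t$ because $\mathcal{K}_t\subseteq C^*$. We treat each key context as an admissible key for $F_k$; by the rank-coordinate discussion of Section~\ref{sec:rank_coordinate_notation}, $F_k$ is defined for any context $k$, since $G_k$ only needs $k$ as a prefix.

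For item 1, we use that the image of any map has size at most that of its domain: $|f_r(\mathcal{K}_t)|\le M^t<N^n$. Hence some $w\in\mathcal{X}_n$ has empty fiber $f_r^{-1}\{w\}$, i.e.\ $w$ is unreachable from $r$.

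For item 2, we apply the pigeonhole principle. Since $M^t>N^n=|\mathcal{X}_n|$, two distinct keys $k_1\neq k_2$ in $\mathcal{K}_t$ satisfy $f_r(k_1)=f_r(k_2)$, i.e.\ $F_{k_1}(r)=F_{k_2}(r)$. This is exactly a key collision in the sense of Definition~\ref{Def_Key_Collision}.

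For item 3, we reuse the proof of Proposition~\ref{prop:finite-key-cardinality} verbatim. That proof used only the fact that $f_r$ is a map between two finite sets of equal cardinality, not that the domain literally equals $\mathcal{X}_n$. So when $M^t=N^n$, injectivity is equivalent to surjectivity. Injectivity is the absence of collisions, and surjectivity is the reachability of every $w$. Combined, these give unique reachability: every fiber has exactly one element.

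The special case $M=N$ follows because $t\mapsto N^t$ is strictly increasing when $N\ge 2$. Then $N^t<N^n$, $N^t=N^n$, $N^t>N^n$ correspond exactly to $t<n$, $t=n$, $t>n$. The case $N=1$ is degenerate: all sets are singletons and there are no distinct keys to collide, so we note it separately.

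There is no substantive obstacle here. The only point needing care is item 3, where Proposition~\ref{prop:finite-key-cardinality} was stated with $\mathcal{K}_{\mathrm{adm}}=\mathcal{X}_n$ and so must be re-read for a general domain of the same size. We also have to check that distinct key tuples in $\mathcal{A}^t$ are genuinely distinct contexts, so that the pigeonhole collision is between distinct keys. This holds because keys are token sequences and are compared as such.
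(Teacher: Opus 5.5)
Your proof is correct and follows the same route the paper intends. The paper gives no separate proof of the corollary; it treats it as an immediate consequence of the pigeonhole and equal-cardinality reasoning in Proposition~\ref{prop:finite-key-cardinality}, and you rightly note that this reasoning uses only $|\mathcal{K}_t|=|\mathcal{X}_n|$, not $\mathcal{K}_{\mathrm{adm}}=\mathcal{X}_n$. Your caveat that the translation into $t<n$, $t=n$, $t>n$ needs $N\ge 2$ is also a valid refinement: for $N=1$ every regime collapses to the bijective case, and the statement's final sentence fails.
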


Finally, we briefly mention that one can reinterpret the above scheme in terms of \textit{set actions}. 
Let $X:=\mathcal{X}_n$, then the set of keys $\mathcal{K}_{\mathrm{adm}}$ 
acts on $X$ via $k \star r := F_k(r) \in X$. 
Then, the unconditional context search problem (Problem \ref{Prob_Uncond_Vectorization}) 
has a solution if and only if the set action is transitive. 
Although set actions are not often studied—since, unlike group actions, they lack algebraic structure—they provide a natural viewpoint in our setting. 
Indeed, the LLM construction does not naturally induce a group structure, so this is the most appropriate formalism available.  
Moreover, due to the lack of structure and algebraicity behind this action, 
there aren't any natural approaches or interpretations that could lead to potential algebraic attacks or exploits. 

\subsubsection{Non-commutativity} \label{Subsection_Noncommutativity}

For completeness, we record an additional algebraic question related to
possible instance-generation attacks. In many cryptographic problems, a
single hard instance can be transformed into multiple related instances
with the same hidden secret. If enough such related instances are
available, the original recovery problem may become easier.

A concrete example of this phenomenon 
appears in the lattice isomorphism problem (LIP) \cite{ducas2022lattice}. In one
formulation, the task is to find a unimodular matrix $A$ such that
\[ Q'=A^T Q A \]
from input $(Q,Q')$. 
The authors of \cite{benvcina2024properties} explain that if a second unimodular matrix $U$ commutes with
$A$, then one can form a second instance of the LIP problem, with the same solution $A$, by letting
\[ \bar{Q}:=U^T Q U, \qquad \bar{Q}':=U^T Q' U. \]
Indeed, by commutativity, we have $\bar{Q}'=A^T \bar{Q}A$. 
They then show that if we have enough such LIP instances, with the same secret $A$, we can recover $A$. This in turn creates the requirement that it should not be easy for the attacker to find too many matrices that commute with $A$. 

This example motivates asking whether an analogous instance-generation structure 
applies to the context search problems (see Problems \ref{Prob_Standard_Vectorization} and \ref{Prob_Uncond_Vectorization}), which underlie the privacy of the \CARTS system. 
In other words, we are asking whether $$E_{k_1}(E_{k_2}(m)) = E_{k_2}(E_{k_1}(m))$$ holds for certain messages $m$ and keys $k_1, k_2$.
We illustrate this question using the diagram in Figure~\ref{Fig:commDiagram}. The diagram highlights two possible paths from $m$ to a doubly encoded message. The question is whether these two paths can lead to the same result, i.e., whether the diagram commutes.

\begin{figure}[h] 
\centering
\begin{tikzpicture}
\node (m) at (0,3) {$m$};
\node (Ek1m) at (4.5,3) {$E_{k_1}(m)$};

\node (Ek2m) at (0,0) {$E_{k_2}(m)$};

\node (topright) at (4.6,0.6) {$E_{k_2}(E_{k_1}(m))$};
\node (bottomright) at (3.4,-0.6) {$E_{k_1}(E_{k_2}(m))$};
\node[above] (QuestMark) at (3.9,-0.1) {$?$};

\draw[->] (m) -- node[above] {$E_{k_1}$} (Ek1m);
\draw[->] (m) -- node[left] {$E_{k_2}$} (Ek2m);

\draw[->] (Ek1m) -- node[right] {$E_{k_2}$} (topright);
\draw[->] (Ek2m) -- node[below] {$E_{k_1}$} (bottomright);
\draw[double distance=1.2pt, line width=0.4pt] (bottomright) -- (topright);
\end{tikzpicture}
\caption{\textbf{Commutativity test for the encoding operators.} Starting from a message $m$, the diagram compares the two compositions $E_{k_2} \circ E_{k_1}$ and $E_{k_1} \circ E_{k_2}$. The diagram is said to commute if both paths yield the same output for all messages $m$. }
\label{Fig:commDiagram}
\end{figure} 

If such commutativity were to hold (for a sufficiently large family of keys), then the order in which encoding operators are applied would be irrelevant and this symmetry can be used to generate multiple related instances of the context search problem, sharing the same secret key.
Namely, given a single context search problem instance $(m,E_k(m))$, where one is looking for $k$, 
we would be able to generate multiple instances 
\[ \mathcal{I}:=\{ \big(m_i, E_k(m_i)\big): i=1,2,... \}, \]
all with the same secret $k$, 
where $m_i:= E_{k_i}(m)$ for keys $k_1, k_2, ...$ satisfying the commutativity property 
\[E_{k_i}(E_{k}(m)) = E_{k}(E_{k_i}(m)). \] 
This additional information, namely the knowledge of the set $\mathcal{I}$ above, instead of simply the single instance $(m,E_k(m))$, could significantly reduce the hardness of recovering the secret key $k$.

We investigate this commutativity question further in Section \ref{EmpiricalExplorations_Noncommutativity}, from an experimental point of view, to gain a better initial understanding of the situation, and leave the theoretical study of this question for future work. 
Furthermore, it would be interesting to study, from a theoretical point of view, what assumptions one must put on a language model, in order to obtain specific commuting keys, and from a practical point of view  the number of commuting key pairs needed to recover the secret $k$. 


\section{Empirical Explorations}
\label{sec:EmpiricalExplorations} 

The theoretical framework developed in the preceding sections raises many 
concrete empirical questions such as: do key collisions arise in practice, 
are the key-induced maps far from commuting, and how brittle is the 
protocol to channel noise? This section reports five experiments that 
investigate these questions and more, under a fixed Llama 3 8B configuration.

\begin{table*}[t]
\centering
\footnotesize
\caption{Empirical experiments summary.}
\label{tab:master-empirical-design}
\begin{tabular}{p{0.23\textwidth}p{0.6\textwidth}p{0.05\textwidth}}
\hline
\textbf{Experiment} &
\textbf{Description} &
\textbf{Runtime} \\
\hline

Implementation correctness &
40 payload-key encoding-decoding tests; forward and reverse recovery checked. &
607.6 s \\

Key-collisions and finite key search &
16 collision-search transcripts; 60 finite keys searched per transcript,
for \(16\times 60=960\) finite-key evaluations. &
3727.7 s \\

Collision stability across new payloads &
Hand-crafted collision from Example~\ref{ExampleOfCollision}, tested 
across 9 additional rank vectors. &
105.4 s \\

Non-commutativity of encoding maps &
36 sampled key pairs; 8 payload rank vectors tested per key pair. &
2265.8 s \\

Robustness to token perturbations &
20 base stegotexts; 80 length-preserving perturbation cases. &
561.9 s \\

\hline
\end{tabular}
\end{table*}

\subsection{Experimental Configuration and Common Metrics}
\label{EmpiricalExplorations_Setup}
All experiments use the \texttt{llama3\_8b\_q4\_k\_m} (Q4\_K\_M GGUF) model \cite{grattafiori2024llama} 
via \texttt{llama-cpp-python}, with $n_{\mathrm{ctx}}=4096$, 
\texttt{logits\_all=true},
and CPU-only inference (\texttt{n\_gpu\_layers=0}, default CPU-threading). 
The software environment was Python \texttt{3.12.3} on Linux \texttt{6.17.0-22-} \texttt{generic-x86\_64}.
Non-empty key contexts are tokenized with \texttt{add\_} \texttt{bos=True}, after which the initial BOS token is dropped to avoid duplicating the beginning-of-sequence
marker; empty contexts use the model BOS token as the minimal autoregressive context. 
Payload and stegotext display strings were tokenized by prepending a
leading space, tokenizing with \texttt{add\_bos=True}, and then
discarding the initial BOS token. This convention matches the
implementation used for both encoding and decoding. The prefix was empty in this run.
Ranks are 1-indexed and computed from full-vocabulary logits, 
sorted by decreasing logit value with ties broken by increasing token id. 
No top-$k$, top-$p$, or additional experiment-level vocabulary restriction was applied. 
All experiments share the same prompt serialization, masking convention, ranking rule, and tie-breaking rule. The random seed was fixed to $123$.

The run used 24 natural-language payloads of 4 to 9 tokens, 60 prompt keys, 40 payload-key pairs for
correctness experiments, 16 transcripts for the collision search,
36 key pairs for the non-commutativity experiment, and 80 perturbation cases for the
robustness experiment. 
Table~\ref{tab:master-empirical-design}
summarizes the experiments and their runtimes.


\subsection{Experiment 1: Implementation Correctness}
\label{EmpiricalExplorations_Correctness} 

Theorem~\ref{thm:carts-correctness} guarantees exact payload 
recovery under idealized deterministic assumptions, but this guarantee 
is only as strong as the implementation's fidelity to those assumptions. 
In practice, subtle discrepancies in tokenization conventions, 
BOS token handling, numerical precision, or masking behavior 
could cause silent failures even when the mathematical proof is correct. 
This experiment verifies that the implementation realizes the theorem's 
assumptions exactly under the tested configuration.

For \(40\) sampled payload-key pairs, we computed $y_i = E_{k_i}(x_i)$, 
decoded $\hat{x}_i = D_{k_i}(y_i)$, and verified $\hat{x}_i = x_i$. 
We also tested the reverse direction, verifying $E_{k_i}(D_{k_i}(y_i)) = y_i$.

\begin{table}[H]
\centering
\caption{Implementation correctness results.}
\label{tab:exp1-correctness}
\begin{tabular}{l c}
\hline
\textbf{Check} & \textbf{Successes} \\
\hline
\(D_k(E_k(x))=x\) & \(40/40\) \\
\(E_k(D_k(y))=y\) & \(40/40\) \\
\hline
\end{tabular}
\end{table}

\medskip
\noindent\textbf{Results.}
As shown in Table~\ref{tab:exp1-correctness}, 
exact recovery succeeded in all \(40/40\) cases in both directions,
confirming that the implementation correctly realizes the deterministic 
rank-transcoding mechanism under the tested Llama 3 8B configuration.

\subsection{Experiment 2: Key-collisions and finite key search}
\label{EmpiricalExplorations_Keycollision}
\label{EmpiricalExplorations_Keycollision1}
\label{EmpiricalExplorations_ContextSearch}

This experiment investigates key collisions via exhaustive search over 
a finite key set, which is the simplest possible attack on the 
promised context search problem (Problem~\ref{Prob_Standard_Vectorization}). 

For the purpose of this experiment, 
we generated a finite admissible key set $\mathcal{K}_{\mathrm{adm}}=\{k_1,\ldots,k_{60}\}$ 
containing \(60\) prompt keys. The keys were constructed from a small
set of natural-language seed prompts together with controlled local
variants. The seed prompts covered mundane topics such as baking bread,
mountain villages, forest animals, baseball practice, 
neutral product descriptions, and the phrase
\texttt{The quick brown fox jumps}. From these seeds, we generated
near-duplicate and template-style variants, including one-character
typos, one-token deletions, plural or singular substitutions,
punctuation changes, and short suffix additions.
The composition of the resulting key set $\mathcal{K}_{\mathrm{adm}}$ is given in Table~\ref{300416p}. 
\begin{table}[H]
\centering
\caption{Composition of the finite admissible key set.}
\label{300416p}
\begin{tabular}{lc}
\hline
\textbf{Key category} & \textbf{Count} \\
\hline
Seed or near-duplicate prompts & 7 \\
One-character typo variants & 7 \\
One-token deletion variants & 7 \\
Plural or singular variants & 7 \\
Punctuation variants & 14 \\
Short suffix variants & 18 \\
\hline
\textbf{Total} & \textbf{60} \\
\hline
\end{tabular}
\end{table}

We also used $16$ payload sequences ranging from $3$ to $10$ tokens. 
These finite payload and key sets are not intended to represent the full
space of possible natural-language prompts, but rather an initial, concrete and reproducible investigation 
of the collisions question. 

For each sampled payload rank vectors $r$ and key $k$, 
we compute $w=F_{k}(r)$, and then search over $k\in\mathcal{K}_{\mathrm{adm}}$ 
and compute the fiber 
$\mathsf{ValidKeys}_{\mathcal{K}_{\mathrm{adm}}}(r,w) 
= \{k\in\mathcal{K}_{\mathrm{adm}}:F_k(r)=w\}$, 
and its size $\mu_{\mathcal{K}_{\mathrm{adm}}}(r,w)$. 
By design, we always have $\mu_{\mathcal{K}_{\mathrm{adm}}}(r_i,w_i)\geq 1$. 

\medskip
\noindent\textbf{Results.}
For all the \(960\) finite-key evaluations (\(60\) keys and \(16\) transcripts), 
the true key was contained in the candidate set in all \(16/16\) cases, confirming consistency of the finite search.  
Furthermore, no key collisions were found: every tested transcript had
candidate fiber size $1$. 
Table~\ref{tab:exp3-key-collision-results} summarizes the results.

\begin{table}[h!]
\centering
\caption{Finite key-collision search results.}
\label{tab:exp3-key-collision-results}
\begin{tabular}{l c}
\hline
\textbf{Quantity} & \textbf{Value} \\
\hline
Finite key set size \( |\mathcal{K}_{\mathrm{adm}}| \) & \(60\) \\
Tested transcripts & \(16\) \\
Finite-key evaluations & \(960\) \\
Observed collisions & \(0\) \\
Largest candidate fiber & \(1\) \\
True-key containment & \(16/16\) \\
\hline
\end{tabular}
\end{table}

The hand-crafted collision of Example~\ref{ExampleOfCollision} was not 
reproduced in this randomized finite-search run, which is consistent 
with the observation that this collision was constructed by deliberately 
exploiting subword tokenization nuances rather than discovered organically. 
This suggests that key collisions do not arise naturally under random 
key generation. 
A more extensive empirical study of collision frequency and structure, 
across larger key spaces and payload distributions, remains an 
interesting direction for future work.

\subsection{Experiment 3: Collision stability across new payloads}
\label{EmpiricalExplorations_Keycollision2}

This experiment was designed to distinguish between key pairs producing 
local (one-off) collisions and key pairs producing global (multiple) collisions, thus providing 
stronger multi-payload key ambiguity. 
Namely, if a colliding key pair
$k_a \neq k_b$ satisfying $$F_{k_a}(r_0)=F_{k_b}(r_0)$$ 
is found, then one can ask whether the equality persists on new
payload rank vectors \(r_1,\ldots,r_S\). In the present run, however, 
Experiment 2 in Section~\ref{EmpiricalExplorations_Keycollision} found no colliding key pairs. 
We therefore turn to the hand-crafted collision found in Example \ref{ExampleOfCollision}, 
which our randomized study did not encounter in Experiment 2. 
Recall that the keys $k_1 = \texttt{The quick brown fox jumps}$ and 
$k_2 = \texttt{The quick brow fox jumps}$ were shown to collide on the rank message 
$r = [1,1,1,1,1]$. We tested whether this collision persists across 
$9$ additional length-$5$ rank vectors.

\medskip
\noindent\textbf{Results.}
The collision was confirmed on $r=[1,1,1,1,1]$: both keys produced 
the stegotext \texttt{"over the lazy dog."} under the tested 
environment. Of the 9 additional rank vectors tested, only 
$r=[1,1,1,1,2]$ also produced a collision; the remaining 8 did not.
Given that the two keys diverged on most tested inputs, 
confirming the local nature of the collision, we did not 
extend the search further.
This confirms that $k_1$ and $k_2$ form a \emph{local} collision: 
they agree on specific inputs but do not induce the same map $F_k$ 
globally, which is particularly surprising given how similar these two keys are. 

We note that the exact numeric value of $w$ is environment-dependent: 
the value $w=F_{k_i}(r)=[126444,1,3736,2,4]$ reported in Example~\ref{ExampleOfCollision} 
was obtained under a specific model checkpoint, whereas the present 
run produced $w=[126166,1,4408,2,4]$. The collision phenomenon itself 
--- that $k_1$ and $k_2$ agree on $r=[1,1,1,1,1]$ --- is consistent 
across both environments.

\begin{table}[h!]
\centering
\caption{Collision stability results for the hand-crafted key pair.}
\label{tab:exp4-candidate-shrinkage}
\begin{tabular}{l c}
\hline
\textbf{Quantity} & \textbf{Value} \\
\hline
Key pair tested & $k_1, k_2$ from Example~\ref{ExampleOfCollision} \\
Rank vectors tested & \(10\) \\
Collisions confirmed & \(2\) (on $[1,1,1,1,1]$ and $[1,1,1,1,2]$) \\
Non-colliding inputs & \(8\) \\
Global collision & No \\
\hline
\end{tabular}
\end{table}

\medskip
\noindent\textbf{Interpretation.}
These results support the distinction introduced in 
Section~\ref{subsection_Key_Collisions} between local and global 
key collisions. The two keys agree on a small number of specific 
inputs, consistent with the tokenization structure that was 
deliberately exploited in Example~\ref{ExampleOfCollision}, 
but diverge on most inputs.

\subsection{Experiment 4: Non-commutativity of encoding maps}
\label{EmpiricalExplorations_Noncommutativity}

This experiment empirically studies the commutativity question raised in 
Section~\ref{Subsection_Noncommutativity}. For sampled key pairs
$(k,h)$ and payload rank vectors $r_1,\ldots,r_S$, we measure if the two compositions
\[F_k(F_h(r_j))\qquad\text{and}\qquad F_h(F_k(r_j)) \]
are equal or not. We also measure how much they differ, using the \textit{commutation} metric 
\[ \widehat{D}_{\log}(k,h) = \frac{1}{S} \sum_{j=1}^{S} d_{\log}(F_k(F_h(r_j)),F_h(F_k(r_j))), \] 
where $d_{\log}$ denotes the normalized log-rank distance, given by 
\[d_{\log}(u,v)= \frac{1}{n}\sum_{i=1}^n \frac{|\log(1+u_i)-\log(1+v_i)|}{\log(1+N)}. \] 
This measures the normalized positional difference between two rank 
vectors, with logarithmic weighting so that differences among 
high-ranked tokens contribute less than differences among 
low-ranked tokens. 

\medskip
\noindent\textbf{Results.}
No commuting key pairs were found among the 36 tested pairs.
The median commutation distance was \(0.1813\), with
empirical 5th and 95th percentiles \(0.1424\) and \(0.2237\). 
Intuitively, this means that swapping the order of two randomly chosen keys produces rank vectors that differ by roughly 18\% of the maximum possible log-rank distance, confirming that the two compositions lead to substantially different stegotexts in practice. 

\begin{table}[h!]
\centering
\caption{Non-commutativity results for key-induced maps.}
\label{tab:exp5-noncommutativity}
\begin{tabular}{l c}
\hline
\textbf{Quantity} & \textbf{Value} \\
\hline
Sampled key pairs & \(36\) \\
Payload rank vectors per pair & \(8\) \\
Commuting pairs found & \(0\) \\
Median \(\widehat{D}_{\log}\) & \(0.1813\) \\
5th percentile of \(\widehat{D}_{\log}\) & \(0.1424\) \\
95th percentile of \(\widehat{D}_{\log}\) & \(0.2237\) \\
\hline
\end{tabular}
\end{table}

Figure~\ref{fig:exp5-commutation-distance-cdf} shows the empirical CDF
of the commutation distances.
\begin{figure}[h!]
\centering
\includegraphics[width=0.45\textwidth]{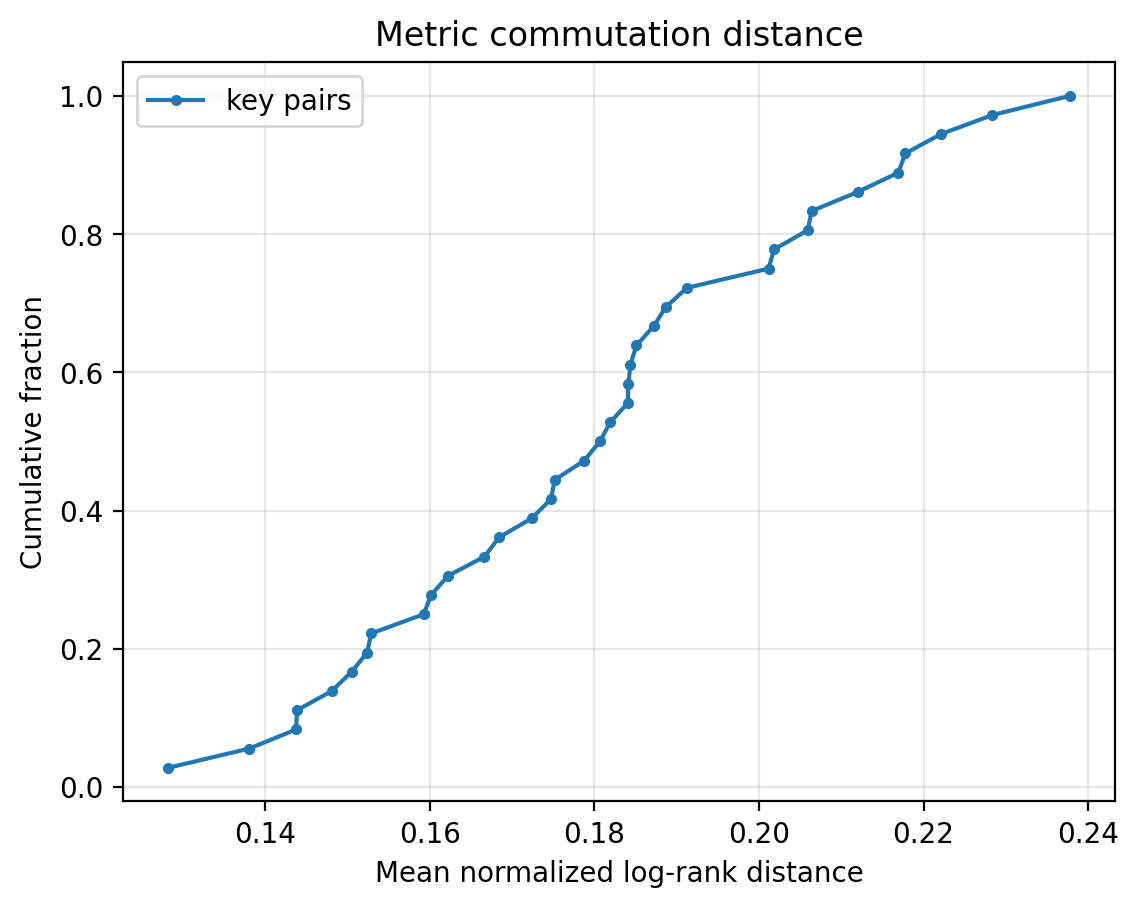}
\caption{\textbf{Metric non-commutativity of key-induced maps.}
Empirical CDF of \(\widehat{D}_{\log}(k,h)\) over sampled key pairs.
No exact sampled commuting pairs were observed; the nonzero distances
show that the order of applying \(F_k\) and \(F_h\) generally changed
the resulting rank vector in the tested sample.}
\label{fig:exp5-commutation-distance-cdf}
\end{figure}

\medskip
\noindent\textbf{Interpretation.}
These results suggest that, in practice, the encoding maps 
$E_k$ and $E_h$ do not commute for randomly sampled key pairs. 
This is a favorable property from a security perspective, as it 
indicates that the instance-generation attack described in 
Section~\ref{Subsection_Noncommutativity} is unlikely to be 
applicable in typical usage scenarios.

\subsection{Experiment 5: Robustness to token perturbations}
\label{EmpiricalExplorations_Robustness}

This experiment studies the brittleness described in Remark~\ref{rmk:error-propagation}, 
that is, whether small changes to the transmitted stegotext can disrupt exact decoding. 
For each generated stegotext $y=E_k(x)$, we construct a perturbed stegotext $\tilde{y}$ 
in one of the following ways:
\begin{enumerate}
    \item \emph{Random token substitution:} one token position is
    selected and replaced by a different random admissible token.
    \item \emph{Nearby-rank substitution:} one token is replaced by a
    token selected from a nearby rank neighborhood under the local model
    ranking.
    \item \emph{Adjacent-token transposition:} adjacent tokens are 
    swapped.
    \item \emph{Punctuation-token substitution:} one token is replaced
    by a punctuation-like token when such a replacement was available.
\end{enumerate} 

We then decode $x:=D_k(y)$ and $\tilde{x}:=D_k(\tilde{y})$ and compare them using the standard notion of 
normalized edit distance, as well as the suffix corruption fraction $\operatorname{SuffixErr}$, 
which is given by  
\[ \operatorname{SuffixErr}
= \frac{1}{n-j_{\mathrm{first}}+1} \sum_{j=j_{\mathrm{first}}}^{n} \mathbf{1}[x_j\neq \tilde{x}_j], \]
where $j_{\mathrm{first}}=\min\{j:x_j\neq \tilde{x}_j\}$ is the first mismatch position.

\medskip
\noindent\textbf{Results.}
The robustness experiment shows that the base \CARTS construction is
highly sensitive to token perturbations. Across 
\(80\) tested perturbations, decoding was corrupted in all 
cases. 
The average token edit distance between the original and corrupted 
decoded payload was $4.200$. 
Suffix corruption was total (namely, equal to $1.000$) in all perturbation types except adjacent-token transposition, 
which had average suffix corruption \(0.993\).
Table~\ref{tab:exp6-robustness-results} summarizes the perturbation 
results by perturbation type. 

\begin{table}[h!]
\centering
\small
\caption{Robustness results under stegotext perturbations.}
\label{tab:exp6-robustness-results}
\begin{tabular}
{p{0.20\textwidth}p{0.08\textwidth}p{0.08\textwidth}}
\hline
\textbf{Perturbation type} &
\textbf{Avg. edit distance} &
\textbf{Avg. suffix corruption} \\
\hline
Random token substitution & \(3.95\) & \(1.000\) \\
Nearby-rank substitution & \(3.80\) & \(1.000\) \\
Adjacent-token transposition & \(4.60\) & \(0.993\) \\
Punctuation-token substitution & \(4.45\) & \(1.000\) \\
\hline
\end{tabular}
\end{table}

\medskip
\noindent\textbf{Interpretation.}
The results shown in Table~\ref{tab:exp6-robustness-results} 
and Figure~\ref{fig:exp6-robustness-edit-distance} 
confirm that \CARTS provides exact invertibility 
under matched conditions, but is brittle to channel noise.  
This is consistent with the autoregressive error-propagation structure 
described in Remark~\ref{rmk:error-propagation}. Robust variants 
would require additional error-correction mechanisms, and are 
left for future work.

\begin{figure}[h]
\centering
\includegraphics[width=0.45\textwidth]{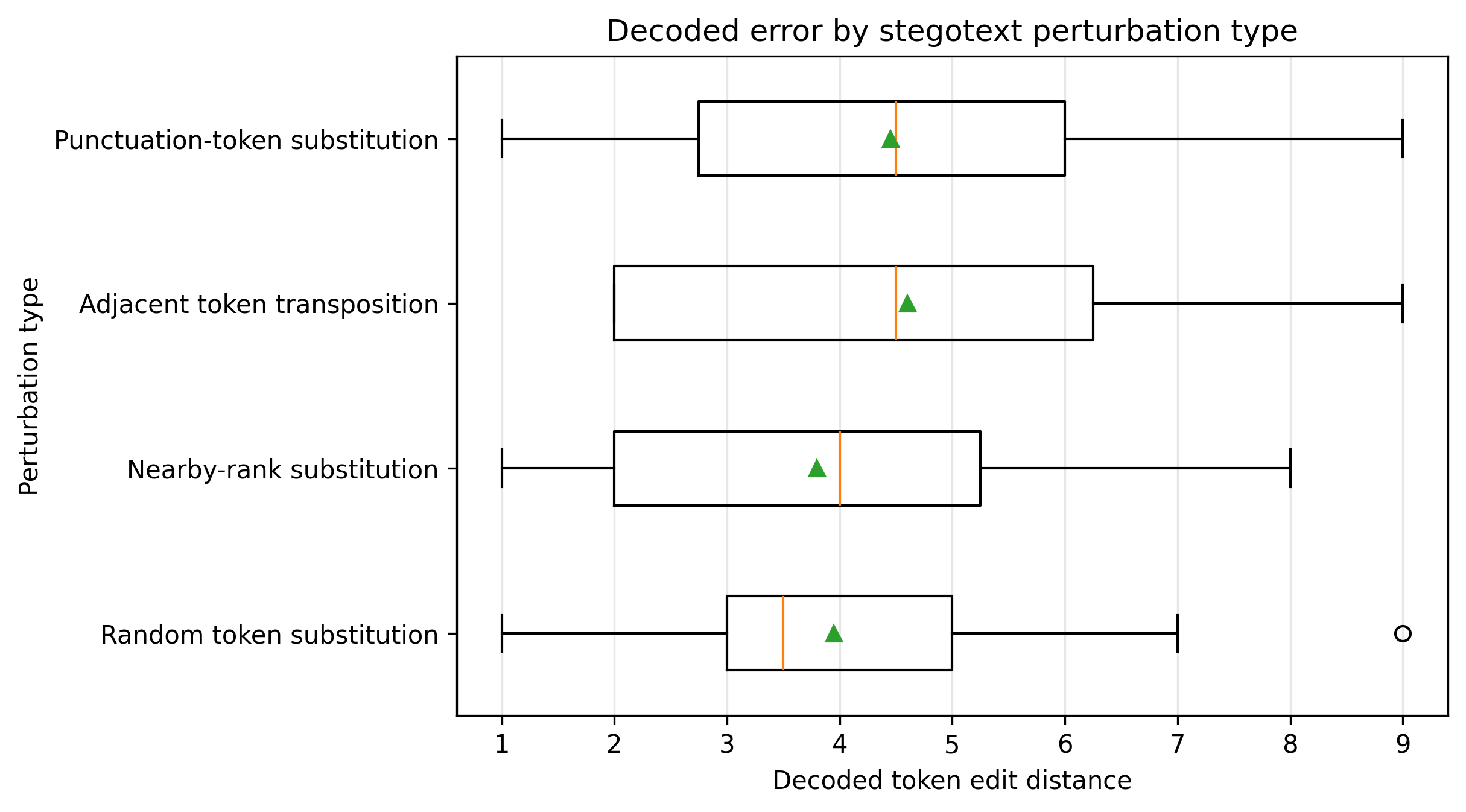}
\caption{Decoded token edit distance grouped by perturbation type.
All perturbation types consistently disrupted decoding.}
\label{fig:exp6-robustness-edit-distance}
\end{figure}

\subsection{Summary of empirical investigations}
\label{EmpiricalExplorations_MinimumPackage}

The five experiments give a consistent empirical picture of the base 
\CARTS protocol. Exact token-level recovery succeeded in all 
\(40/40\) tested payload-key pairs, confirming that the implementation 
correctly realizes the deterministic rank-transcoding mechanism. 
The finite key-collision search found no collisions among \(960\) 
evaluations, and the hand-crafted collision of 
Example~\ref{ExampleOfCollision} was confirmed to be local, 
persisting on only 2 out of 10 tested rank vectors. 
No commuting key pairs were found among the 36 tested pairs, 
suggesting that an instance-generation attack 
is unlikely to apply in practice. 
Finally, all 80 tested token perturbations corrupted 
decoding, confirming the autoregressive error-propagation structure 
described in Remark~\ref{rmk:error-propagation}.

Together, these results validate the formal framework introduced in 
Sections~\ref{sec:carts-protocol} and~\ref{sec:security}: \CARTS 
provides exact, full-token-rate, deterministic keyed rank-transcoding, 
and the empirical evidence suggests that the construction is resistant 
to the specific attack vectors studied here.

\section{Conclusion} 

This paper introduced \CARTS as a formal framework for keyed 
text-to-text rank-transcoding steganography, formalizing the Calgacus 
construction \cite{norelli2025llms} and establishing its exact 
correctness under deterministic model assumptions. Within this 
framework, we defined the computational problems naturally associated 
with the construction --- context search, key collisions, message 
equivocation, and non-commutativity of key-induced maps --- and studied 
their theoretical properties and empirical behavior, 
providing the first rigorous treatment of the security landscape of 
this promising but previously unstudied class of protocols. 
The experiments confirmed exact encoding-decoding recovery, found no key collisions under 
random key generation, established that the hand-crafted collision of 
Example~\ref{ExampleOfCollision} is local, and found no commuting key 
pairs, suggesting that the construction is resistant to the specific 
attack vectors studied here.

A primary goal of this work is to open up this area for systematic investigation. 
On the theoretical side, the hardness of the context search problem, 
the asymptotic structure of collision fibers, 
the conditions under which commuting key pairs exist, 
and the design of more general CARTS transforms 
are all concrete open problems that follow naturally from the framework. 
On the empirical side, 
larger key spaces, more diverse payload distributions, and semantic 
quality metrics would give a more complete picture of the construction's behavior. 

We hope the formal framework introduced here provides a solid foundation for the constructive use of 
autoregressive language models in steganography and privacy-preserving 
communication in general.

\bibliographystyle{ACM-Reference-Format}
\bibliography{sample-base}

\end{document}